\newcommand{\distas}[1]{\mathbin{\overset{#1}{\kern\z@\sim}}}%
\newcommand{\bm}[1]{\mathbf{#1}}
\newsavebox{\mybox}\newsavebox{\mysim}
\newcommand{\distras}[1]{%
  \savebox{\mybox}{\hbox{\kern3pt$\scriptstyle#1$\kern3pt}}%
  \savebox{\mysim}{\hbox{$\sim$}}%
  \mathbin{\overset{#1}{\kern\z@\resizebox{\wd\mybox}{\ht\mysim}{$\sim$}}}%
}
\newcolumntype{C}[1]{>{\centering\let\newline\\\arraybackslash\hspace{0pt}}m{#1}}
\newtheorem{theorem}{Theorem}
\newtheorem{definition}{Definition}
\newtheorem{proposition}{Proposition}
\newtheorem{lemma}{Lemma}
\newtheorem{corollary}{Corollary}
\newcommand{\be}{\begin{equation}}
\newcommand{\ee}{\end{equation}}
\newcommand{\bi}{\begin{itemize}}
\newcommand{\ei}{\end{itemize}}
\newcommand{\ben}{\begin{enumerate}}
\newcommand{\een}{\end{enumerate}}
\newcommand{\stb}{\State $\bullet$ \;}
\DeclareMathOperator*{\argmin}{\arg\!\min}
\def\spacingset#1{\renewcommand{\baselinestretch}%
{#1}\small\normalsize} \spacingset{1}
\let\oldbibliography\thebibliography
\renewcommand{\thebibliography}[1]{\oldbibliography{#1}
\setlength{\itemsep}{0pt}} 
\newcommand{\blind}{0}
\begin{document}

\if0\blind
{
  \title{\bf Minimax and minimax projection designs using clustering}
  \author{Simon Mak \quad and \quad V. Roshan Joseph\\
   H. Milton Stewart School of Industrial and Systems Engineering,\\
  Georgia Institute of Technology, Atlanta, GA 30332}
  \maketitle
} \fi

\if1\blind
{
  \bigskip
  \bigskip
  \bigskip
  \begin{center}
    {\LARGE\bf Title}
\end{center}
  \medskip
} \fi

\bigskip
\begin{abstract}
Minimax designs provide a uniform coverage of a design space $\mathcal{X} \subseteq \mathbb{R}^p$ by minimizing the maximum distance from any point in this space to its nearest design point. Although minimax designs have many useful applications, e.g., for optimal sensor allocation or as space-filling designs for computer experiments, there has been little work in developing algorithms for generating these designs, due to its computational complexity. In this paper, a new hybrid algorithm combining particle swarm optimization and clustering is proposed for generating minimax designs on any convex and bounded design space. The computation time of this algorithm scales linearly in dimension $p$, meaning our method can generate minimax designs efficiently for high-dimensional regions. Simulation studies and a real-world example show that the proposed algorithm provides improved minimax performance over existing methods on a variety of design spaces. Finally, we introduce a new type of experimental design called a minimax projection design, and show that this proposed design provides better minimax performance on projected subspaces of $\mathcal{X}$ compared to existing designs. An efficient implementation of these algorithms can be found in the R package \texttt{minimaxdesign}.
\end{abstract}

\noindent
{\it Keywords:} Accelerated gradient descent, computer experiments, experimental design, k-means clustering, particle swarm optimization.
\vfill

\spacingset{1.45} 
\section{Introduction}

For a desired design space $\mathcal{X} \subseteq \mathbb{R}^p$, a \textit{minimax distance design} (or simply \textit{minimax design}) is the set of points which \textit{minimizes} the \textit{maximum} distance from any point in $\mathcal{X}$ to its nearest design point. In other words, minimax designs provide a uniform coverage of the design space $\mathcal{X}$ in worst-case scenarios, by ensuring every point in $\mathcal{X}$ is sufficiently well-covered by a design point. The emphasis on mitigating worst-case scenarios allows minimax designs to be applied in a wide range of settings. One such application is in the field of computer experiments, where the goal is to construct a computationally cheap emulator of an expensive simulator using a small number of simulation runs. By conducting these simulations at the points of a minimax design, it can be shown \citep{Jea1990} that the resulting emulator minimizes worst-case prediction error. Minimax designs are also useful for sensor allocation. In particular, by placing sensors according to a minimax design, the minimum information sensed at any point can be maximized. This is particularly important in health and safety monitoring (see, e.g., \citealp{Vea2012}), where failure to detect faults in any part of $\mathcal{X}$ may result in catastrophic human or structural loss. Minimax designs are also useful for resource allocation problems for which an equitable distribution of limited resources is desired \citep{Lus1999}.

Despite its many uses, there has been little algorithmic developments for computing minimax designs \citep{Pat2012}. A major reason for this is that, when $\mathcal{X}$ is a continuous space, the minimax objective (introduced later in Section 2) requires evaluating the supremum over an infinite set, which is costly to approximate. Some existing work include the seminal paper on minimax designs by \cite{Jea1990} and the minimax Latin hypercube designs proposed by  \cite{Van2008}, but both papers only consider two-dimensional designs with restricted design sizes. This greatly limits the applicability of these methods in practice. There has also been some work on minimax designs when $\mathcal{X}$ is approximated by a finite set of points. For example, \cite{Jea1995} studied these designs in the context of two-level factorial experiments, and \cite{Tan2013} proposed a set-covering binary integer program (BIP) for computing minimax designs when points restricted to a finite candidate set of size $N<\infty$. As we show later, BIP can be very time-consuming and provides poor minimax designs for high-dimensional regions. In this paper, we propose a hybrid clustering algorithm which can generate near-optimal minimax designs efficiently, both for large design sizes and in high-dimensions. 

Although most clustering-based designs are not intended for minimax use, there are two reasons for discussing and comparing these designs in our paper. First, an understanding of clustering-based designs allows us to better motivate the proposed minimax clustering algorithm. Second, since the proposed algorithm is similar to the popular Lloyd's algorithm \citep{Llo1957,Llo1982} used in k-means clustering, our simulation studies show that many clustering-based designs indeed possess good minimax properties, and it would be worthwhile to use these designs as a comparison benchmark. The use of clustering in experimental design dates back to \cite{Dal1950} and \cite{Cox1957}, who proposed designs for optimal stratified sampling. K-means clustering using Lloyd's algorithm is also employed for generating a variety of designs, such as \textit{principal points} \citep{Flu1990}, \textit{minimum-MSE quantizers} \citep{LBG1980} and \textit{mse-rep-points} \citep{FW1993}. To foreshadow, we show later that minimax designs can be obtained using a modification of Lloyd's algorithm. More recent applications of clustering in design include the Fast Flexible space-Filling (FFF) designs proposed by \cite{Lea2014}, which make use of hierarchical clustering to generate space-filling designs for computer experiments. A more in-depth discussion of these designs is provided in Section 2. 

The paper is outlined as follows. To better motivate the need for minimax designs, Section 2 begins with an overview of existing methods, then compares these methods with the proposed algorithm for a real-world example on air quality monitoring. Section 3 presents the new hybrid clustering algorithm for generating minimax designs, and provides some theoretical results on its correctedness and running time. Section 4 then outlines some numerical simulations comparing the proposed method with existing algorithms for a variety of design spaces. Section 5 introduces a new type of experimental design called \textit{minimax projection designs}, which are obtained by performing a simple refinement step on a minimax design. Finally, Section 6 discusses some future research directions.

\section{Background and motivation}
We begin by formally defining a minimax design:
\begin{definition}
\citep{Jea1990} Let $\mathcal{X} \subseteq \mathbb{R}^p$ be a desired design space. An $n$-point minimax design on $\mathcal{X}$ is defined as the optimal solution of
\be
\argmin_{\mathcal{D}_n \in \mathbb{D}_n} \sup_{\bm{x} \in \mathcal{X}} \|\bm{x} - Q(\bm{x},\mathcal{D}_n)\|,
\label{eq:minimax}
\ee
where $\mathbb{D}_n \equiv \{\{\bm{m}_i\}_{i=1}^n: \bm{m}_i \in \mathcal{X}\}$ is the set of all unordered $n$-tuples on $\mathcal{X}$, and $Q(\bm{x},\mathcal{D}_n) \equiv \argmin_{\bm{z} \in \mathcal{D}_n} \| \bm{x} - \bm{z} \|$ returns the nearest design point to $\bm{x}$ under norm $\| \cdot \|$.
\end{definition}
For the remainder of this paper, $\| \cdot \|$ is taken to be the Euclidean norm $\| \cdot \|_2$, although the proposed algorithm can easily be generalized to other norms.

This section begins by detailing the existing methods for generating minimax designs mentioned in the Introduction. A real-world application on air monitoring is then presented to motivate the importance of minimax designs in practice.

\subsection{Existing algorithms}

We first introduce the BIP algorithm in \cite{Tan2013}, which generates minimax designs on the finite design space $\mathcal{X} = \{\bm{y}_i\}_{i=1}^N$. Let $I_1, \cdots, I_N$ be binary decision variables, with $I_j = 1$ indicating point $j$ is included in the design and $I_j = 0$ otherwise. Also, let $\Omega_i$ denote the index set of points in $\mathcal{X}$ with (Euclidean) distance at most $S$. The BIP algorithm optimizes the following problem:
\be
\begin{aligned}
z(S) = \underset{I_1, \cdots, I_N}{\text{min}}  \sum_{j=1}^N I_j \quad \textrm{s.t.} \quad & \sum_{j \in \Omega_i} I_j \geq 1, \quad i = 1, \cdots, N, \quad I_j \in \{0,1\}, \quad j = 1, \cdots, N,\\
& d_{ij} = \| \bm{y}_i - \bm{y}_j\|_2, \quad \Omega_i = \{j \; : \; d_{ij} \leq S, j = 1, \cdots, N\}.
\end{aligned}
\label{eq:Tan}
\ee
In words, the optimization in \eqref{eq:Tan} chooses the smallest number of design points from $\mathcal{X}$, denoted as $z(S)$, needed to ensure all points in $\mathcal{X}$ are at most a distance of $S$ away from its nearest design point. The $n$-point minimax design can then be obtained by finding the smallest radius $S$ for which the optimal design size $z(S)$ satisfies $z(S) = n$. When the candidate points $\{\bm{y}_j\}_{j=1}^N$ are, in some sense, representative of a continuous design space, the design generated by BIP can be used to approximate the minimax design in \eqref{eq:minimax}.

Unfortunately, BIP has a major caveat which greatly limits its applicability in practice: the optimization in \eqref{eq:Tan} is computationally tractable only when the number of candidate points $N$ is small. For example, due to memory and time constraints, $N$ cannot exceed 1,000 for most desktop computers. In this sense, BIP is not only computationally demanding, but provides poor minimax designs when $p$ is large, since 1,000 points are insufficient for representing a high-dimensional space. This is illustrated in the simulations in Section 4.

Next, we discuss two types of clustering-based designs: principal points \citep{Flu1990} and FFF designs \citep{Lea2014}. Assume the design space $\mathcal{X}$ is convex and bounded, and let $U(\bm{X})$ denote the uniform distribution on $\mathcal{X}$. Just as minimax designs are defined as a minimizer of the minimax objective in \eqref{eq:minimax}, the \textit{principal points} of $U(\bm{X})$ are similarly defined as a minimizer of the integrated squared-error criterion:
\be
\argmin_{\mathcal{D}_n \in \mathbb{D}_n} \int_{\mathcal{X}} \|\bm{x} - Q(\bm{x},\mathcal{D}_n)\|_2^2 \; d \bm{x},
\label{eq:MSE}
\ee
where $\mathbb{D}_n$ and $Q(\bm{x},\mathcal{D}_n)$ are defined as in \eqref{eq:minimax}. In words, principal points aim to provide a uniform coverage of $\bm{X}$ by ensuring that, for a point uniformly sampled on $\mathcal{X}$, the expected squared-distance to its closest design point is minimized. Principal points are also known as \textit{minimum-MSE quantizers} in signal processing literature \citep{LBG1980}, and \textit{mse-rep-points} in quasi-Monte Carlo literature \citep{FW1993}.

To compute principal points, \cite{Flu1993} proposed the following two-step algorithm. First, generate a large random sample $\{\bm{y}_j\}_{j=1}^N \distas{i.i.d.} U(\bm{X})$, along with an initial design $\{\bm{m}_i\}_{i=1}^n \distas{i.i.d.} U(\bm{X})$. K-means clustering using Lloyd's algorithm \citep{Llo1957, Llo1982} is then performed with the large sample $\{\bm{y}_j\}_{j=1}^N$ as clustering data. In particular, Lloyd's algorithm iterates the following two updates until design points converge: (a) each sample point in $\{\bm{y}_j\}_{j=1}^N$ is first assigned to its closest design point; (b) each design point is then updated as the arithmetic mean of sample points assigned to it. The converged design is then taken as the principal points of $U(\mathcal{X})$. A similar algorithm is used in the popular Linde-Buzo-Gray (LBG) algorithm \citep{LBG1980} for generating minimum-MSE quantizers.

Justifying why such an algorithm provides locally optimal solutions of \eqref{eq:MSE} requires two lines of reasoning. First, using the random sample $\{\bm{y}_j\}_{j=1}^N$, the Monte Carlo approximation of \eqref{eq:MSE} becomes:

\small
\be
\begin{aligned}
& \underset{\boldsymbol{\gamma},\bm{m}_1, \cdots, \bm{m}_n}{\text{min}} \frac{1}{N}\sum_{i = 1}^n \sum_{j = 1}^N \gamma_{ij}\|\bm{y}_j - \bm{m}_i\|_2^2 \quad \text{s.t.} & \gamma_{ij} \in \{0,1\},  &\;  i = 1, \cdots, n,\; j = 1, \cdots, N;\\
& & \bm{m}_i \in \mathbb{R}^p, &\;  i = 1, \cdots, n; \; \sum_{i=1}^n \gamma_{ij} = 1, \; j = 1, \cdots, N. 
\label{eq:clust}
\end{aligned}
\ee
\normalsize
Here, $\boldsymbol{\gamma} = \{\gamma_{ij}\}$ is the set of binary decision variables, with $\gamma_{ij} = 1$ indicating the assignment of sample point $\bm{y}_j$ to design point $\bm{m}_i$. These binary variables serve the same role as $Q(\bm{x}, \mathcal{D}_n)$ in \eqref{eq:MSE}, namely, to assign each point in $\mathcal{X}$ to its closest design point. Likewise, the decision variables $\{\bm{m}_i\}_{i=1}^n$ correspond to the design optimization of $\mathcal{D}_n \in \mathbb{D}_n$ in \eqref{eq:MSE}. Second, the two updates in Lloyd's algorithm iteratively optimize the assignment variables $\{\gamma_{ij}\}$ and design points $\{\bm{m}_i\}$ in \eqref{eq:clust} respectively, while keeping other decision variables fixed. Specifically, by assigning each sample point $\bm{y}_j$ to its closest design point $\bm{m}_i$, the assignment variables $\{\gamma_{ij}\}$ in \eqref{eq:clust} are optimized for a fixed design $\{\bm{m}_i\}$. Similarly, by updating each design point $\bm{m}_i$ as the arithmetic mean of sample points assigned to it, the design $\{\bm{m}_i\}_{i=1}^n$ in \eqref{eq:clust} is optimized for fixed assignment variables. Iterating these updates until convergence therefore returns a locally optimal design for \eqref{eq:MSE}.

The FFF designs proposed by \cite{Lea2014} are of a similar flavor to principal points. These designs are generated by first obtaining a large sample $\{\bm{y}_j\}_{j=1}^N \distas{i.i.d.} U(\mathcal{X})$, conducting hierarchical clustering with Ward's minimum-variance criterion \citep{War1963} to form $n$ clusters of $\{\bm{y}_j\}_{j=1}^N$, then using cluster centroids as design points. The computation time of FFF designs can be shown to be $O(pN^2 \log N)$ \citep{Epp2000}, which suggests that, although these designs can be generated efficiently in high-dimensions for a fixed sample size $N$, its computation may be prohibitive when $N$ increases. To contrast, the proposed algorithm generates minimax designs efficiently both in high-dimensions and for large sample sizes.

In this paper, we compare the minimax performance of BIP designs, principal points and FFF designs to the designs generated by the proposed method. To reiterate, while the latter two designs are not intended for minimax use, they are included to provide a benchmark for our algorithm, and to show that such designs indeed provide decent minimax performance.

\subsection{Motivating example: Air quality monitoring}
\begin{figure}[!t]
\centering
\includegraphics[width=\textwidth]{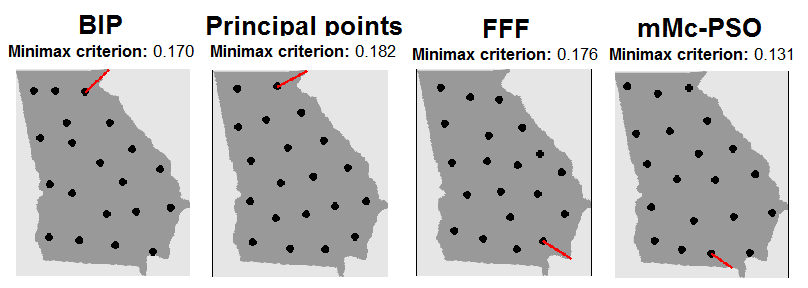}
\caption{Four different 20-point designs for the state of Georgia. The red line on each plot connects the point in Georgia furthest from the design to its nearest design point, with its length equal to the minimax criterion of the design. Of these four designs, the new method \textsc{mMc-PSO} provides the best minimax design.}
\label{fig:GAex}
\end{figure}

To motivate the use of minimax designs in real-world situations, consider the problem of air quality monitoring in the state of Georgia. With wildfire occurrences and air pollution levels on the rise in many parts of the United States \citep{Bea2008}, there is an increasing need for precise air quality monitoring, both for supporting warning systems and for guiding public health and policy decisions. To this end, many states have adopted the Ambient Monitoring Program (AMP), which requires hourly reporting of concentration levels for six key air pollutants. Unfortunately, only a small number of monitoring stations can be set-up for each state, since the building and maintenance of these stations can be very expensive. As a result, there are only 30 such stations situated in the state of Georgia \citep{Ose2016}. A key problem then is to allocate these limited stations in such a way that each part of the state is covered sufficiently well by a station. The optimal allocation scheme, by definition, is that provided by a minimax design.

Figure \ref{fig:GAex} plots the 20-point designs generated by the three existing methods: BIP, principal points and FFF, along with the design generated by the proposed algorithm \textsc{mMc-PSO}. The red line on each plot connects the point in Georgia furthest from the design to its nearest design point. Note that the minimax criterion in \eqref{eq:minimax} (reported at the top of each plot) corresponds to the length of this line. Two key observations can be made here. First, principal points and \textsc{mMc-PSO} appear to provide the best visual uniformity of the four methods, whereas the design generated by BIP appears to be visually non-uniform. Second, \textsc{mMc-PSO} provides the lowest minimax distance of the four methods, which illustrates the improvement that the proposed method offers over existing methods. We show that this improvement holds for a wide range of design regions in Section 4.

\section{Methodology}
In this section, we first present the \textit{minimax clustering} algorithm as a generalization of Lloyd's algorithm, then establish theoretical results for the correctedness and running time for the proposed method. Finally, we introduce a global optimization modification for minimax clustering, which allows near-optimal minimax designs to be generated. 

\subsection{Minimax clustering}

To begin, we introduce a new type of center for a finite set of points:
\begin{definition}
For a finite set of $m$ points $\mathcal{Z} = \{\bm{z}_i\}_{i=1}^m \subseteq \mathbb{R}^p$, its $C_q$-center is defined as:
\be
\argmin_{\bm{z}\in\mathbb{R}^p} D_{q}(\bm{z}; \mathcal{Z}) \equiv \argmin_{\bm{z}\in\mathbb{R}^p} \frac{1}{mq} \sum_{i = 1}^m \| \bm{z} - \bm{z}_i\|_2^q.
\label{eq:ckcent}
\ee
\end{definition}
$C_q$-centers can be seen as Fr\'echet means \citep{NB2013}, which are of the form $\argmin_{\bm{z}\in\mathbb{R}^p} \sum_{i=1}^m w_i d(\bm{z},\bm{z}_i)$, with weights $w_i = 1/(mq)$ and distance function $d(\bm{x},\bm{y}) = \| \bm{x} - \bm{y}\|_2^q$. With $q = 2$, the $C_q$-center becomes the arithmetic mean, used for updating cluster centers in Lloyd's algorithm. More importantly, as $q \rightarrow \infty$, the $C_{q}$-center returns the point which \textit{minimizes} the \textit{maximum} distance between it and a point in $\mathcal{Z}$. To foreshadow, $C_\infty$-centers will be used in place of arithmetic means in the proposed clustering scheme.

The intuition for minimax clustering can then be presented by direct analogy to principal points. Consider the minimax objective in \eqref{eq:minimax}, and note that for sufficiently large choices of $q > 0$, this objective can be approximated as:
\begin{align}
\argmin_{\mathcal{D}_n \in \mathbb{D}_n} \left( \int_{\mathcal{X}} \| \bm{x} - Q(\bm{x},\mathcal{D}_n) \|_2^q \; d\bm{x} \right)^{1/q},\label{eq:minimaxk}
\end{align}
In practice, $q$ should be large enough to provide a good approximation of \eqref{eq:minimax}, yet small enough to avoid numerical instability. The choice of $q$ is discussed further in Section 3.2.1.

The similarities between the approximation \eqref{eq:minimaxk} and the integrated squared-error \eqref{eq:MSE} allows for a modification of Lloyd's algorithm to generate minimax designs. First, generate a large sample $\{\bm{y}_j\}_{j=1}^N \distas{i.i.d.} U(\mathcal{X})$, along with initial cluster centers $\{\bm{m}_i\}_{i=1}^n \distas{i.i.d.} U(\mathcal{X})$. The Monte Carlo approximation of \eqref{eq:minimaxk} becomes:
\small
\be
\begin{aligned}
& \underset{\boldsymbol{\gamma},\bm{m}_1, \cdots, \bm{m}_n}{\text{min}} \frac{1}{N}\sum_{i = 1}^n \sum_{j = 1}^N \gamma_{ij}\|\bm{y}_j - \bm{m}_i\|_q^2 \quad \text{s.t.} & \gamma_{ij} \in \{0,1\},  &\;  i = 1, \cdots, n,\; j = 1, \cdots, N;\\
& & \bm{m}_i \in \mathbb{R}^p, &\;  i = 1, \cdots, n; \; \sum_{i=1}^n \gamma_{ij} = 1, \; j = 1, \cdots, N. 
\label{eq:clustk}
\end{aligned}
\ee
\normalsize
where $\boldsymbol{\gamma} = \{\gamma_{ij}\}$ is again the set of binary assignment variables, and $\{\bm{m}_i\}_{i=1}^n$ the set of design points. Minimax clustering then iteratively applies the following two updates until design points converge: (a) each sample point in $\{\bm{y}_j\}_{j=1}^N$ is first assigned to its closest design point, which optimizes the assignment variables $\{\gamma_{ij}\}$ in \eqref{eq:clustk} for a fixed design $\{\bm{m}_i\}$; (b) each design point is then updated as the $C^{(q)}$-center of points assigned to it, which optimizes the design $\{\bm{m}_i\}_{i=1}^n$ in \eqref{eq:clustk} for fixed assignments. By iterating these two updates until convergence, one should obtain a locally-optimal minimax design. The above procedure, which we call \textit{minimax clustering} (or \textsc{mMc} for short), is summarized in Algorithm \ref{alg:mMc}.

\begin{algorithm}[t]
\caption{Minimax clustering}
\label{alg:mMc}
\begin{algorithmic}[1]
\small
\Function{mMc}{$\{\bm{m}_i\}_{i=1}^n,N,q,t_{mMc},\epsilon_{in}$} \Comment{$\{\bm{m}_i\}_{i=1}^n$ - initial design, $t_{mMc}$ - max. iterations}
\stb Initialize $\{\bm{y}_j\}_{j=1}^N$ using a Sobol' sequence
\Repeat
\stb For $j = 1, \cdots, N$, assign $\bm{y}_j$ to its closest design point in Euclidean norm.
\stb For $i=1, \cdots, n$, update $\bm{m}_i \leftarrow C_q$\textsc{-AGD}$(\mathcal{Z}_i,q,\epsilon_{in})$, where $\mathcal{Z}_i$ is the set of points assigned to $\bm{m}_i$
\stb $t \leftarrow t + 1$.
\Until{design points converge \textbf{OR} $t \geq t_{mMc}$.}
\stb \Return converged design $\{\bm{m}_i\}_{i=1}^n$.
\EndFunction
\end{algorithmic}
\end{algorithm}

In our implementation, deterministic low-discrepancy sequences \citep{Nie1992} are used in place of random samples for $\{\bm{y}_j\}_{j=1}^N$, since such sequences provide a better approximation of integrals compared to Monte Carlo methods. Assume for now that the design space $\mathcal{X}$ is $[0,1]^p$, the unit hypercube in $\mathbb{R}^p$. We employ a specific type of low-discrepancy sequence in Algorithm \ref{alg:mMc} called a \textit{Sobol' sequence} \citep{Sob1967}, which can be generated efficiently using the function \texttt{sobol} in the R package \texttt{randtoolbox} \citep{DS2013}. Section 4.2 provides a brief discussion on low-discrepancy sequences for general design spaces.

\subsection{Convergence results}
The above discussion still leaves two questions unanswered. First, how can $C_q$-centers computed efficiently? Second, does minimax clustering indeed converge in finite iterations to a local optimum, and if so, at what rate? These concerns are addressed in this subsection.

Since the discussion below is quite technical, readers interested in the hybridization of \textsc{mMc} with particle swarm should skip to Section \ref{sec:pso}. Some background readings on convex programming (e.g., \citealp{BN2001} and \citealp{BV2004}) may also be useful for understanding the developments in this subsection. For brevity, proofs are deferred to supplementary materials.

\subsubsection{Computing $C_q$-centers}
We first present an algorithm for computing $C_q$-centers, and prove that this algorithm converges quickly even when the number of points $m$ or dimension $p$ become large. The following theorem shows that the objective $D_{q}(\bm{z}; \mathcal{Z})$ in \eqref{eq:ckcent} is strictly convex, and that the $C_q$-center of $\mathcal{Z}$ is unique and contained in the convex hull of $\mathcal{Z}$, defined as $\text{conv}(\mathcal{Z}) = \{\bm{z}= \sum_{i=1}^m \alpha_i \bm{z}_i : \alpha_i \geq 0, \sum_{i=1}^m \alpha_i = 1\}$.

\begin{theorem}
Let $\mathcal{Z} = \{\bm{z}_i\}_{i=1}^m$ and let $q \geq 2$. Then $D_{q}(\bm{z}; \mathcal{Z})$ is strictly convex in $\bm{z}$. Moreover, the $C_q$-center $C_q(\mathcal{Z})$ in \eqref{eq:ckcent} is unique, and contained in $\text{\normalfont{conv}}(\mathcal{Z})$.
\label{thm:uni}
\end{theorem}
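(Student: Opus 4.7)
The plan is to establish the three assertions in sequence, each by exposing the right structure and invoking a standard convex-analytic fact. The only delicate point is the strict convexity, since a naive Hessian calculation fails: for $q > 2$, $\nabla^2 \|\bm{z}-\bm{z}_i\|_2^q$ vanishes at $\bm{z} = \bm{z}_i$, so one cannot conclude strict convexity of each summand from positive-definiteness of its Hessian. Instead, I would argue directly from the composition $\|\bm{z}-\bm{z}_i\|_2^q = \phi(\|\bm{z}-\bm{z}_i\|_2)$ with $\phi(t) = t^q$, which is strictly convex and increasing on $[0,\infty)$ for $q > 1$.

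For strict convexity of a single summand, fix $\bm{z}_i$, take $\bm{a} \neq \bm{b}$ and $\lambda \in (0,1)$, and set $u = \|\bm{a}-\bm{z}_i\|_2$, $v = \|\bm{b}-\bm{z}_i\|_2$, $w = \|\lambda\bm{a}+(1-\lambda)\bm{b}-\bm{z}_i\|_2$. The triangle inequality gives $w \leq \lambda u + (1-\lambda)v$, and strict convexity of $\phi$ on $[0,\infty)$ for $q \geq 2$ gives $(\lambda u + (1-\lambda)v)^q \leq \lambda u^q + (1-\lambda) v^q$ with equality iff $u = v$. Chaining these, equality in $w^q \leq \lambda u^q + (1-\lambda)v^q$ would force both $u = v$ and $w = \lambda u + (1-\lambda)v$; the second equality requires $\bm{a}-\bm{z}_i$ and $\bm{b}-\bm{z}_i$ to be nonnegative scalar multiples of a common direction, and combined with equal norms this forces $\bm{a} = \bm{b}$, contradicting the hypothesis. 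Hence each summand is strictly convex, and a sum of strictly convex functions is strictly convex, giving strict convexity of $D_q(\bm{z};\mathcal{Z})$.

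For existence and uniqueness of $C_q(\mathcal{Z})$, I would note that $D_q$ is continuous and coercive (since $\|\bm{z}\|_2 \to \infty$ implies $\|\bm{z}-\bm{z}_i\|_2 \to \infty$ for each $i$, so $D_q(\bm{z};\mathcal{Z}) \to \infty$), so a minimizer exists; strict convexity then forces the minimizer to be unique.

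For containment in $\mathrm{conv}(\mathcal{Z})$, I would argue by contradiction. Suppose $\bm{z}^* := C_q(\mathcal{Z}) \notin \mathrm{conv}(\mathcal{Z})$, and let $\hat{\bm{z}}$ be the Euclidean projection of $\bm{z}^*$ onto the closed convex set $\mathrm{conv}(\mathcal{Z})$, so $\hat{\bm{z}} \neq \bm{z}^*$. The first-order projection inequality $(\bm{z}^* - \hat{\bm{z}})^T(\bm{z}_i - \hat{\bm{z}}) \leq 0$ holds for every $\bm{z}_i \in \mathcal{Z}$, and expanding $\|\bm{z}^* - \bm{z}_i\|_2^2 = \|\bm{z}^* - \hat{\bm{z}}\|_2^2 + \|\hat{\bm{z}} - \bm{z}_i\|_2^2 - 2(\bm{z}^* - \hat{\bm{z}})^T(\bm{z}_i - \hat{\bm{z}})$ yields $\|\hat{\bm{z}} - \bm{z}_i\|_2 < \|\bm{z}^* - \bm{z}_i\|_2$ strictly for every $i$. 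Raising to the $q$-th power, summing over $i$, and dividing by $mq$ gives $D_q(\hat{\bm{z}};\mathcal{Z}) < D_q(\bm{z}^*;\mathcal{Z})$, contradicting the optimality of $\bm{z}^*$. Hence $C_q(\mathcal{Z}) \in \mathrm{conv}(\mathcal{Z})$, completing the proof.
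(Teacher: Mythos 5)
Your proof is correct, and it diverges from the paper's in one substantive place. For strict convexity, both arguments rest on a composition decomposition, but the factorizations differ: the paper writes $\|\bm{z}-\bm{z}_i\|_2^q = g(h(\bm{z}))$ with $h(\bm{z})=\|\bm{z}-\bm{z}_i\|_2^2$ \emph{strictly} convex and $g(x)=x^{q/2}$ merely convex and strictly increasing, so a one-line lemma (strictly convex inner, convex increasing outer) delivers strict convexity with no equality bookkeeping; you instead compose the (non-strictly convex) norm with the strictly convex power $t\mapsto t^q$, which forces you to track the equality cases of both the triangle inequality and the power inequality. Your version is correct but longer; your opening observation that the Hessian of each summand degenerates at $\bm{z}=\bm{z}_i$ for $q>2$ is a fair point and explains why some composition argument is needed rather than a pointwise Hessian bound. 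The existence/uniqueness step (coercivity plus continuity for existence, strict convexity for uniqueness) is identical to the paper's. The genuinely different step is the containment $C_q(\mathcal{Z})\in\mathrm{conv}(\mathcal{Z})$: the paper sets the gradient to zero and reads off the stationary point as an explicit convex combination $C_q(\mathcal{Z})=\sum_i \alpha_i\bm{z}_i$ with weights proportional to $\|C_q(\mathcal{Z})-\bm{z}_i\|_2^{q-2}$, which has the side benefit of exhibiting a Weiszfeld-type fixed-point characterization of the center; you argue by contradiction via the Euclidean projection onto $\mathrm{conv}(\mathcal{Z})$, showing it strictly decreases every distance $\|\cdot-\bm{z}_i\|_2$ and hence the objective. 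Your route is more elementary, needs no differentiability of $D_q$, and extends verbatim to any strictly increasing transformation of the distances (in particular to $q\in[1,2)$), whereas the paper's route yields the explicit weight formula. Both are sound.
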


Next, recall that a function $h: \mathbb{R}^p \rightarrow \mathbb{R}$ is $\beta$-\textit{Lipschitz smooth} (or simply $\beta$-\textit{smooth}) if:
\[\|\nabla h(\bm{z}) - \nabla h(\bm{z}')\|_2 \leq \beta \|\bm{z} - \bm{z}'\|_2,\]
where $\nabla h$ is the gradient of $h$. Likewise, $h$ is $\mu$\textit{-strongly convex} if:
\[(\nabla h(\bm{z}) - \nabla h(\bm{z}'))^T (\bm{z} - \bm{z}') \geq \mu \|\bm{z} - \bm{z}'\|_2.\]
We show next that, for some specified $\bar{\beta} > 0$ and $\bar{\mu} > 0$, the objective function $D_{q}( \bm{z}; \mathcal{Z})$ is $\bar{\beta}$-smooth and $\bar{\mu}$-strongly convex.

\begin{theorem}
For $q \geq 4$, $D_{q}( \bm{z}; \mathcal{Z})$ is $\bar{\beta}$-smooth and $\bar{\mu}$-strongly convex for $\bm{z} \in \text{conv}(\mathcal{Z})$, where:
\begin{equation}
\bar{\beta} = (q-1)(q-2) \max_{j=1, \cdots, m} D_{q-2}( \bm{z}_j; \mathcal{Z}) > 0 \quad \text{and} \quad \bar{\mu} = (q-2)D_{q-2}( C_{q-2}(\mathcal{Z}); \mathcal{Z})> 0.
\label{eq:bars}
\end{equation}
\label{thm:lip}
\end{theorem}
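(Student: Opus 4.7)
The plan is to reduce both claims to a uniform sandwich estimate on the Hessian of $D_q(\bm{z};\mathcal{Z})$ over $\text{conv}(\mathcal{Z})$, using the well-known equivalences $\nabla^2 h \preceq \beta I \Leftrightarrow \beta$-smooth and $\nabla^2 h \succeq \mu I \Leftrightarrow \mu$-strongly convex for $C^2$ functions. First I would compute the Hessian directly: writing $r_i(\bm{z}) = \|\bm{z}-\bm{z}_i\|_2$, a chain-rule calculation gives
\begin{equation*}
\nabla^2 D_q(\bm{z};\mathcal{Z}) \;=\; \frac{1}{m}\sum_{i=1}^m \Bigl[\,r_i^{\,q-2}\,I \;+\; (q-2)\,r_i^{\,q-4}(\bm{z}-\bm{z}_i)(\bm{z}-\bm{z}_i)^T\Bigr].
\end{equation*}
The assumption $q \geq 4$ is precisely what is needed to make the rank-one term bounded and continuous at $\bm{z}=\bm{z}_i$ (so the Hessian is defined everywhere on $\text{conv}(\mathcal{Z})$), and it will also let me invoke Theorem 1 on $D_{q-2}$.

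For the upper (smoothness) bound I would estimate the rank-one piece by $(q-2)r_i^{q-4}(\bm{z}-\bm{z}_i)(\bm{z}-\bm{z}_i)^T \preceq (q-2)r_i^{q-2}I$, combine with the isotropic piece, and recognize $\frac{1}{m}\sum_i r_i^{q-2} = (q-2)D_{q-2}(\bm{z};\mathcal{Z})$, yielding $\nabla^2 D_q(\bm{z};\mathcal{Z}) \preceq (q-1)(q-2)\,D_{q-2}(\bm{z};\mathcal{Z})\,I$. Since Theorem 1 gives that $D_{q-2}(\cdot;\mathcal{Z})$ is (strictly) convex, and since any convex function on a polytope attains its maximum at an extreme point, the supremum of $D_{q-2}(\bm{z};\mathcal{Z})$ over $\text{conv}(\mathcal{Z})$ is attained at one of the $\bm{z}_j$. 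This produces exactly the claimed $\bar\beta$.

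For the lower (strong convexity) bound I would discard the PSD rank-one term to get $\nabla^2 D_q(\bm{z};\mathcal{Z}) \succeq (q-2)\,D_{q-2}(\bm{z};\mathcal{Z})\,I$. Now I need the minimum of $D_{q-2}(\bm{z};\mathcal{Z})$ over $\text{conv}(\mathcal{Z})$. Theorem 1 applied to exponent $q-2 \geq 2$ states that $C_{q-2}(\mathcal{Z})$ is the unique global minimizer of $D_{q-2}(\cdot;\mathcal{Z})$ over $\mathbb{R}^p$ and lies in $\text{conv}(\mathcal{Z})$, so the minimum value over $\text{conv}(\mathcal{Z})$ equals $D_{q-2}(C_{q-2}(\mathcal{Z});\mathcal{Z})$, giving $\bar\mu$. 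Positivity of both constants follows because $D_{q-2}(\bm{z};\mathcal{Z}) > 0$ whenever $\mathcal{Z}$ contains at least two distinct points (which is the only non-degenerate case).

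The main obstacle I foresee is not any single estimate but stitching together these pointwise bounds into uniform bounds on $\text{conv}(\mathcal{Z})$ in a way that lands on the precise constants advertised. In particular, the upper bound requires being careful to use convexity of $D_{q-2}$ (from Theorem 1) to push the supremum to a vertex, and the lower bound requires that the minimizer of the Hessian-lower-bound $D_{q-2}(\cdot;\mathcal{Z})$ actually lies inside the region of interest $\text{conv}(\mathcal{Z})$ — which is exactly what Theorem 1 supplies. Everything else is routine matrix-inequality manipulation.
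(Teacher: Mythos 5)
Your proposal is correct and follows essentially the same route as the paper: compute the Hessian, bound the rank-one terms by $\|\bm{z}-\bm{z}_i\|_2^2 I$ (or drop them) to sandwich $\nabla^2 D_q$ between $(q-2)D_{q-2}(\bm{z};\mathcal{Z})I$ and $(q-1)(q-2)D_{q-2}(\bm{z};\mathcal{Z})I$, then push the supremum of $D_{q-2}$ to a point of $\mathcal{Z}$ (the paper proves this as a standalone lemma via Jensen's inequality, where you invoke the standard extreme-point fact) and the infimum to $C_{q-2}(\mathcal{Z})$ via Theorem \ref{thm:uni}. The constants and the role of $q\geq 4$ match exactly.
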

\vspace{-0.3cm}

\begin{algorithm}[t]
\caption{Computing $C_q$-centers}
\label{alg:cq}
\begin{algorithmic}[1]
\small
\Function{$C_q$-AGD}{$\{\bm{z}_i\}_{i=1}^m$, $q$, $\epsilon_{in}$} \Comment {$\epsilon_{in}$ - desired tolerance}
\stb Set $t = 1$ and initialize starting points $\bm{z}^{[1]} \leftarrow \frac{1}{m} \sum_{i=1}^m \bm{z}_i$, $\bm{u}^{[1]} \leftarrow \frac{1}{m} \sum_{i=1}^m \bm{z}_i$.
\stb Initialize the sequences $\{\lambda_t\}_{t=0}^\infty$ and $\{\gamma_t\}_{t=1}^\infty$ from \eqref{eq:seq}.
\stb Compute the Lipschitz constant $\bar{\beta}$ in \eqref{eq:bars}.
\While{$\|\bm{z}^{[t]} - \bm{z}^{[t-1]}\|_2 < \epsilon_{in}$}
\stb Update $\bm{u}^{[t+1]} \leftarrow \bm{z}^{[t]} - \frac{1}{\bar{\beta}} \left( \frac{1}{m} \sum_{i=1}^m \|\bm{z}^{[t]}-\bm{z}_i\|_2^{q-2} (\bm{z}^{[t]}-\bm{z}_i) \right)$.
\stb Update $\bm{z}^{[t+1]} \leftarrow (1 - \gamma_t) \bm{u}^{[t+1]} + \gamma_t \bm{u}^{[t]}$.
\stb $t \leftarrow t + 1$.
\EndWhile
\stb \Return $\bm{z}^{[t]}$.
\EndFunction
\end{algorithmic}
\end{algorithm}

The $\bar{\beta}$-smoothness and $\bar{\mu}$-strong convexity in Theorem \ref{thm:lip} allow us to employ a quick convex optimization technique called \textit{accelerated gradient descent} \citep{Nes1983}, or AGD, to compute $C_q$-centers. The implementation of AGD is straightforward. Suppose $h: \mathbb{R}^p \rightarrow \mathbb{R}$, the desired objective to minimize, is twice-differentiable, convex and $\beta$-smooth. Let $\bm{u}^{[t]}  \in \mathbb{R}^p$ be the $t$-th solution iterate, and let $\bm{z}^{[t]} \in \mathbb{R}^p$ be an intermediate vector. Also, define the sequences $\{\lambda_t\}_{t=0}^\infty$ and $\{\gamma_t\}_{t=1}^\infty$ by the recursion equations:
\begin{equation}
\lambda_0 = 0, \quad \lambda_{t} = \frac{1+\sqrt{1+4\lambda_{t-1}^2}}{2}, \quad \gamma_{t} = \frac{1-\lambda_{t}}{\lambda_{t+1}} \quad \text{for $t = 1, 2, 3, \cdots$}.
\label{eq:seq}
\end{equation}
AGD then iterates the following two updates until the solution sequence $\{\bm{u}^{[t]}\}_{t=1}^\infty$ converges:
\begin{equation}
\bm{u}^{[t+1]} \leftarrow \bm{z}^{[t]} - \frac{1}{\beta}\nabla h(\bm{z}^{[t]}), \quad \bm{z}^{[t+1]} \leftarrow (1-\gamma_t) \bm{u}^{[t+1]} + \gamma_t \bm{u}^{[t]}.
\label{eq:agd}
\end{equation}
A direct application of AGD for the optimization in \eqref{eq:ckcent} is provided in Algorithm \ref{alg:cq}.

One may perhaps ask why this accelerated scheme is preferred over traditional line-search methods (see, e.g., \citealp{NW2006}), in which the solution sequence $\{\bm{u}^{[t]}\}_{t=1}^\infty$ is updated by the line-search optimization:
\begin{equation}
\bm{u}^{[t+1]} = \bm{u}^{[t]} - \eta_t \nabla h(\bm{u}^{[t]}), \quad \eta_t = \argmin_{\eta > 0} h(\bm{u}^{[t]} - \eta \nabla h(\bm{u}^{[t]})).
\label{eq:linesearch}
\end{equation}
In other words, for a given iterate $\bm{u}^{[t]}$, the next iterate $\bm{u}^{[t+1]}$ in line-search methods is obtained by searching for the optimal step-size $\eta_t$ to move along the direction of its negative gradient $-\nabla h(\bm{u}^{[t]})$. The advantages of AGD are two-fold. First, AGD exploits the $\beta$-smoothness and $\mu$-convexity of \eqref{eq:ckcent} to achieve an optimal rate of convergence among gradient-based optimization methods \citep{Nes2013}. Second, the step-size optimization in \eqref{eq:linesearch} requires multiple evaluations of the objective $h$ and its gradient $\nabla h$. Since the evaluation of both $D_{q}( \bm{z}; \mathcal{Z})$ and $\nabla D_{q}( \bm{z}; \mathcal{Z})$ require $O(mp)$ work, such evaluations become prohibitively expensive to compute when either the number of points $m$ or dimension $p$ are large. AGD avoids this problem by replacing the optimized step-size $\eta_t$ with a fixed stepsize $1/\bar{\beta}$.

Using Theorem \ref{thm:lip}, the correctedness and running time of Algorithm \ref{alg:cq} can be established.

\begin{corollary}
For $\mathcal{Z} = \{\bm{z}_i\}_{i=1}^m$ and $q \geq 4$, consider the sequence of solutions $\{\bm{z}^{[t]}\}_{t=1}^\infty$ from Algorithm \ref{alg:cq}. To guarantee an $\epsilon_{in}$-accuracy for the objective in \eqref{eq:ckcent}, i.e., $|D_{q}( \bm{z}^{[t]} ; \mathcal{Z}) - D_{q}( C_q(\mathcal{Z}) ; \mathcal{Z})| < \epsilon_{in}$, the computation work required is:
\begin{equation}
O\left( mp \sqrt{(q-1)\kappa_{q-2}(\mathcal{Z})} \log \frac{1}{\epsilon_{in}}\right), \; \text{where } \; \kappa_{q}(\mathcal{Z}) = \frac{\max_{j=1, \cdots, m} D_{q}( \bm{z}_j; \mathcal{Z})}{D_{q}( C_{q}(\mathcal{Z}); \mathcal{Z})}
\label{eq:itupp}
\end{equation}
is the ratio of maximum and minimum values of $D_{q}( \bm{z} ; \mathcal{Z})$ for $\bm{z} \in \text{conv}(\mathcal{Z})$.
\label{corr:cqconv}
\end{corollary}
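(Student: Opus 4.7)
The plan is to apply the classical convergence guarantee for accelerated gradient descent to the objective $D_q(\,\cdot\,;\mathcal{Z})$. Recall that if $h$ is $\beta$-smooth and $\mu$-strongly convex along the trajectory of its AGD iterates, then
\[
h(\bm{u}^{[T]}) - h(\bm{u}^*) \leq \bigl(1 - \sqrt{\mu/\beta}\bigr)^{T}\bigl(h(\bm{u}^{[1]}) - h(\bm{u}^*) + (\mu/2)\|\bm{u}^{[1]} - \bm{u}^*\|_2^2\bigr),
\]
so $T = O\bigl(\sqrt{\beta/\mu}\, \log(1/\epsilon)\bigr)$ iterations suffice to guarantee $h(\bm{u}^{[T]}) - h(\bm{u}^*) < \epsilon$. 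The proof then has three components: (i) verify this guarantee is applicable in our setting; (ii) plug in the smoothness and strong-convexity constants from Theorem \ref{thm:lip} to obtain the correct iteration count; and (iii) multiply by the per-iteration cost.

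For step (ii), Theorem \ref{thm:lip} gives $\bar{\beta} = (q-1)(q-2) \max_{j} D_{q-2}(\bm{z}_j; \mathcal{Z})$ and $\bar{\mu} = (q-2) D_{q-2}(C_{q-2}(\mathcal{Z}); \mathcal{Z})$, so the factor of $(q-2)$ cancels and
\[
\bar{\beta}/\bar{\mu} \;=\; (q-1)\,\frac{\max_{j} D_{q-2}(\bm{z}_j; \mathcal{Z})}{D_{q-2}(C_{q-2}(\mathcal{Z}); \mathcal{Z})} \;=\; (q-1)\kappa_{q-2}(\mathcal{Z}).
\]
For step (iii), each AGD update requires only the gradient evaluation
$\nabla D_q(\bm{z}; \mathcal{Z}) = \frac{1}{m}\sum_{i=1}^m \|\bm{z}-\bm{z}_i\|_2^{q-2}(\bm{z}-\bm{z}_i)$, together with the constant-time sequence updates \eqref{eq:seq}; this is $O(mp)$ arithmetic. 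Multiplying the iteration count by the per-iteration cost yields the claimed $O\!\bigl(mp\sqrt{(q-1)\kappa_{q-2}(\mathcal{Z})} \log(1/\epsilon_{in})\bigr)$ bound. Translating the function-value tolerance back into the iterate stopping criterion used in Algorithm \ref{alg:cq} uses $\|\bm{z}^{[t]} - C_q(\mathcal{Z})\|_2^2 \leq (2/\bar{\mu})\,|D_q(\bm{z}^{[t]};\mathcal{Z}) - D_q(C_q(\mathcal{Z});\mathcal{Z})|$, which only affects constants absorbed into the big-$O$.

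The main obstacle is step (i): Theorem \ref{thm:lip} only guarantees $\bar{\beta}$-smoothness and $\bar{\mu}$-strong convexity on $\text{conv}(\mathcal{Z})$, whereas globally the Hessian of $D_q$ grows like $\|\bm{z}\|_2^{q-2}$, so no uniform Lipschitz constant exists on $\mathbb{R}^p$. I would address this by showing the AGD trajectory never leaves (a mild enlargement of) $\text{conv}(\mathcal{Z})$: the starting points $\bm{z}^{[1]} = \bm{u}^{[1]} = \frac{1}{m}\sum_i \bm{z}_i$ lie in $\text{conv}(\mathcal{Z})$, the unique minimizer $C_q(\mathcal{Z})$ lies in $\text{conv}(\mathcal{Z})$ by Theorem \ref{thm:uni}, and the AGD energy function is monotonically contracted, so the iterates stay in a sublevel set $\{\bm{z} : D_q(\bm{z};\mathcal{Z}) \leq D_q(\bm{u}^{[1]};\mathcal{Z})\}$, which is bounded and lies in a neighborhood of $\text{conv}(\mathcal{Z})$ on which $\bar{\beta}$ and $\bar{\mu}$ remain valid up to absolute constants. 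Once trajectory confinement is established, the quoted rate follows directly from Nesterov's theorem and the algebraic simplification in step (ii).
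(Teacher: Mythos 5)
Your proposal follows essentially the same route as the paper's proof: invoke Nesterov's iteration bound $O(\sqrt{\beta/\mu}\log(1/\epsilon_{in}))$ for smooth, strongly convex objectives, substitute $\bar{\beta}$ and $\bar{\mu}$ from Theorem \ref{thm:lip} so that $\bar{\beta}/\bar{\mu} = (q-1)\kappa_{q-2}(\mathcal{Z})$, and multiply by the $O(mp)$ per-iteration gradient cost. Your additional step (i) on confining the AGD trajectory to (a neighborhood of) $\text{conv}(\mathcal{Z})$ addresses a point the paper's proof silently skips, and is a welcome refinement rather than a deviation in approach.
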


Several illuminating observations can be made from this corollary. First, considering only the error tolerance $\epsilon_{in}$, the computational work required for AGD to achieve $\epsilon_{in}$-accuracy is $O(\log(1/\epsilon_{in}))$, which is sizably smaller than the $O(1/\epsilon_{in})$ work needed for standard line-search methods \citep{NW2006}. Hence, Algorithm \ref{alg:cq} not only avoids multiple evaluations of the objective and gradient, but also converges with fewer iterations compared to line-search methods. Second, the bound in \eqref{eq:itupp} grows on the order of $\sqrt{q}$, meaning Algorithm \ref{alg:cq} takes longer to terminate as $q$ grows larger. This illustrates the trade-off between performance and accuracy: a larger value of $q$ ensures a better approximation of the minimax criterion \eqref{eq:minimaxk}, but requires longer time to compute. In our simulations, $q = 10$ appears to provide a good compromise in this trade-off. Lastly, the bound in \eqref{eq:itupp} grows as $\kappa_{q-2}(\mathcal{Z})$ increases, meaning $C_q$-centers may take longer to compute when points in $\mathcal{Z}$ are more scattered.

\subsubsection{Correctedness and running time of minimax clustering}

The correctedness and running time of minimax clustering can then be established by direct analogy to that for Lloyd's algorithm. This is formally demonstrated below.
\begin{theorem}
Algorithm \ref{alg:mMc} terminates after at most $N^n$ iterations. Moreover, assuming $n \leq {N}^{1/2}$, each iteration of the loop in Algorithm \ref{alg:mMc} requires $O\left( N^{3/2} p\sqrt{q-1} \log \frac{1}{\epsilon_{in}}\right)$ work, where $\epsilon_{in}$ is the inner tolerance in Corollary \ref{corr:cqconv}. Lastly, when $C_q$-center updates in \eqref{eq:ckcent} are exact, Algorithm \ref{alg:mMc} also returns a locally optimal design for \eqref{eq:clustk}.
\label{thm:mMconv}
\end{theorem}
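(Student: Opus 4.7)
The plan is to follow the standard convergence analysis of Lloyd's algorithm, with the arithmetic mean replaced by the $C_q$-center and with Theorem \ref{thm:uni} and Corollary \ref{corr:cqconv} supplying the uniqueness and per-center runtime inputs respectively.

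For finite termination and local optimality (the first and third claims), the central observation is that Algorithm \ref{alg:mMc} performs exact block-coordinate descent on the objective in \eqref{eq:clustk}. Step (a) is the closed-form minimizer over the assignment variables $\{\gamma_{ij}\}$ with $\{\bm{m}_i\}$ fixed (assign each $\bm{y}_j$ to its nearest center), and step (b), when exact, is the closed-form minimizer over $\{\bm{m}_i\}$ with $\{\gamma_{ij}\}$ fixed, since by Definition 2 and Theorem \ref{thm:uni} the $C_q$-center is the unique minimizer of the per-cluster objective. Hence the objective in \eqref{eq:clustk} is monotonically non-increasing, and it strictly decreases whenever the assignment pattern changes, again by uniqueness of the subsequent $C_q$-centers. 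Since the number of distinct assignment patterns that can arise as Voronoi partitions of $\{\bm{y}_j\}_{j=1}^N$ induced by $n$ centers is finite, and can be bounded by $N^n$ via a counting argument on such partitions, the algorithm must terminate in at most $N^n$ iterations. Upon termination, both block-minimality conditions are satisfied simultaneously, which is precisely the coordinate-wise optimality condition characterizing a local minimum of \eqref{eq:clustk}.

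For the per-iteration runtime (the second claim), each pass of the outer loop performs two units of work. Step (a) assigns each of the $N$ samples to its nearest of $n$ centers in $\mathbb{R}^p$ at cost $O(Nnp)$, which under the hypothesis $n \leq N^{1/2}$ is $O(N^{3/2} p)$. Step (b) runs $C_q$-\textsc{AGD} once per cluster, and by Corollary \ref{corr:cqconv} the $i$th call costs $O\bigl(|\mathcal{Z}_i| p \sqrt{(q-1)\kappa_{q-2}(\mathcal{Z}_i)} \log(1/\epsilon_{in})\bigr)$. Summing over $i$, using $\sum_i |\mathcal{Z}_i| = N$, and bounding the condition numbers $\kappa_{q-2}(\mathcal{Z}_i)$ by a uniform constant (justified because every cluster lies inside the bounded design space $\mathcal{X}$), this aggregates to $O\bigl(Np \sqrt{q-1} \log(1/\epsilon_{in})\bigr)$. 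Combining the two pieces and absorbing the smaller term yields the claimed $O\bigl(N^{3/2} p \sqrt{q-1} \log(1/\epsilon_{in})\bigr)$ bound.

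The main obstacle I expect is sharpening the termination count from a generic finite bound to the stated $N^n$: monotone descent together with uniqueness of $C_q$-centers immediately gives finiteness, but obtaining the specific $N^n$ figure requires identifying the algorithm's state with the $n$-tuple of subsets of $\{\bm{y}_j\}$ that define the centers and arguing that previously visited tuples cannot recur. A secondary, milder concern is the rigorous absorption of $\kappa_{q-2}(\mathcal{Z}_i)$ into a uniform constant in the runtime, which hinges on a uniform diameter bound for the clusters within $\mathcal{X}$; if this is judged too coarse, the bound can instead be stated with an explicit worst-case $\sqrt{\kappa_{q-2}}$ factor carried through from Corollary \ref{corr:cqconv}.
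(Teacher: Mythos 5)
Your treatment of finite termination and local optimality matches the paper's argument essentially step for step: strict monotone descent of the objective in \eqref{eq:clustk} under exact block-coordinate updates (using uniqueness of the $C_q$-center from Theorem \ref{thm:uni}), the $N^n$ count of possible assignments, the impossibility of revisiting an assignment, and coordinate-wise optimality at convergence. That portion is fine.

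The runtime argument, however, contains a genuine gap exactly where you flagged a ``milder concern.'' You bound $\kappa_{q-2}(\mathcal{Z}_i)$ by a uniform constant on the grounds that each cluster lies in the bounded set $\mathcal{X}$. This is false: $\kappa_q(\mathcal{Z}) = \max_j D_q(\bm{z}_j;\mathcal{Z}) / D_q(C_q(\mathcal{Z});\mathcal{Z})$ is a \emph{ratio} and is scale-invariant (rescaling all points by $c$ multiplies numerator and denominator by $c^q$), so boundedness of the ambient space gives no control whatsoever. Moreover it genuinely grows with cluster size: take $m_i-1$ coincident points and one outlier at distance $1$; then the numerator is at least $(m_i-1)/(m_i q)$ while the denominator is at most $1/(m_i q)$, so $\kappa_q(\mathcal{Z}_i) \geq m_i - 1$. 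Your proposed fallback --- carrying an explicit $\sqrt{\kappa_{q-2}}$ factor --- does not rescue the claim either, because the theorem as stated has no such factor. The paper closes this gap differently: it proves the worst-case bound $\kappa_q(\mathcal{Z}_i) \leq m_i + 1$ (using Lemma \ref{lem:bas}, that the maximizer of $D_q$ over $\mathrm{conv}(\mathcal{Z}_i)$ is one of the $\bm{z}_j$, so its term already appears in the denominator sum), which makes the per-cluster cost $O\bigl(m_i^{3/2}\, p \sqrt{q-1}\, \log(1/\epsilon_{in})\bigr)$, and then sums via $\sum_i m_i^{3/2} \leq \bigl(\sum_i m_i\bigr)^{3/2} = N^{3/2}$. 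Note this also reverses which step dominates: in the paper the $C_q$-center updates are the $O\bigl(N^{3/2} p \sqrt{q-1} \log(1/\epsilon_{in})\bigr)$ bottleneck and the hypothesis $n \leq N^{1/2}$ is used to show the $O(Nnp)$ assignment step is subordinate, whereas in your accounting the assignment step would (incorrectly) dominate.
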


Unfortunately, it is difficult to establish a bound on the number of iterations required for termination of Algorithm \ref{alg:mMc}, since there is still a gap between theory and practice for the same problem in Lloyd's algorithm. Theoretical work (\citealp{Aea2009, Bho2009}) suggests that in the worst-case, the number of iterations can grow rapidly in the number of clustering points $N$. However, in practice, Lloyd's algorithm nearly always terminates after several iterations, leading many practitioners (see, e.g., \citealp{AS2007}) to evaluate total running time by the running time of one iteration. From our simulations, Algorithm \ref{alg:mMc} also converges after a small number of iterations, so we similarly use the single-iteration time in Theorem \ref{thm:mMconv} to measure for total running time of minimax clustering. 

In this light, the running time of Theorem \ref{thm:mMconv} illustrates two computational advantages of minimax clustering. First, since this time is linear in $p$, minimax clustering can be performed efficiently in high-dimensions, which is similar to what is observed for FFF designs in Section 2.1. Furthermore, the running time of minimax clustering grows at a rate of $N^{3/2}$, which is much faster than the $O(N^2 \log N)$ work for FFF designs. Hence, a larger number of approximating points $N$ can be used in minimax clustering, suggesting that the proposed method provides higher quality minimax designs when $\mathcal{X}$ is high-dimensional. As we see later in Section 4, this is indeed the case.

\subsection{Minimax clustering with particle swarm optimization}
\label{sec:pso}
Due to its greedy nature, Lloyd's algorithm has two drawbacks: it is sensitive to choices of initial cluster centers, and may return a locally optimal design which is far from the global design \citep{Jai2010}. Since minimax clustering employs the same greedy steps, it suffers from the same downfalls. A simple but computationally expensive remedy is to perform Lloyd's algorithm multiple times with different initial centers, then pick the solution with the smallest criterion in \eqref{eq:clust}. More elaborate methods requiring less computation include kernel k-means \citep{TL2009}, sequential k-means \citep{LVV2003}, and combining k-means with particle swarm optimization \citep{VE2003}. To retain the iterative nature of Algorithm \ref{alg:mMc}, we adopt the latter hybrid approach for global optimization of minimax clustering.

\begin{algorithm}[!t]
\caption{Minimax clustering with PSO}
\label{alg:mMcPSO}
\small
\begin{algorithmic}[1]
\Function{\textsc{mMc-PSO}}{$n,N,q,s,t_{mMc},t_{pp},\epsilon_{in}$}
\stb Generate $\{\bm{y}_j\}_{j=1}^N$ using a Sobol' sequence and initial design particles $\mathcal{D}_{k} = \{\bm{m}^{k}_i\}_{i=1}^n, k = 1, \cdots, s$ using scrambled Sobol' sequences.
\stb Define $h_q$ as the objective in \eqref{eq:clustk}, and $h$ as the minimax criterion in \eqref{eq:minimax} with $\mathcal{X} = \{\bm{y}_j\}_{j=1}^N$.
\stb \textit{Minimax clustering PSO:} Initialize  local-best designs $\mathcal{L}_k \leftarrow \mathcal{D}_{k}, k = 1, \cdots, s$, and global-best design $\mathcal{G} \leftarrow \argmin_{\mathcal{D}_{k}} h_q( \mathcal{D}_{k})$. Set initial velocities $\bm{v}_k \leftarrow \bm{0}, k = 1, \cdots, s$.
\For{$t = 1, \cdots, t_{mMc}$} \Comment{$t_{mMc}$ - max. PSO iterations}
\For{$k = 1, \cdots, s$} \Comment{For each design particle...}
\stb $\mathcal{D}_k \leftarrow$ \textsc{mMc}($\mathcal{D}_{k},N,q,1,\epsilon_{in}$) \Comment{One step of minimax clustering}
\stb $\bm{v}_k \leftarrow w \bm{v}_k + c_1 \bm{r}_1 (\mathcal{L}_k - \mathcal{D}_k)  + c_2 \bm{r}_2 (\mathcal{G} - \mathcal{D}_k), \; \bm{r}_1,\bm{r}_2 \distas{i.i.d.} U[0,1]^{np}$ \Comment{Update vel.}
\stb $\mathcal{D}_k \leftarrow \mathcal{D}_k + \bm{v}_k$ \Comment{Move particle towards best positions}
\If{$h_q(\mathcal{D}_k) < h_q(\mathcal{L}_k)$} $\mathcal{L}_k \leftarrow \mathcal{D}_k$ \Comment{Update local-best designs}
\EndIf
\If{$h_q(\mathcal{D}_k) < h_q(\mathcal{G})$} $\mathcal{G} \leftarrow \mathcal{D}_k$ \Comment{Update global-best design}
\EndIf
\EndFor
\EndFor
\stb \textit{Post-processing:} Reset global-best design $\mathcal{G} \leftarrow \argmin_{\mathcal{D}_{k}} h( \mathcal{D}_{k})$ and velocities $\bm{v}_k \leftarrow \bm{0}$.
\For{$t = 1, \cdots, t_{pp}$} \Comment{$t_{pp}$ - max. post-proc. iterations}
\For{$k = 1, \cdots, s$} \Comment{For each design particle...}
\stb $\bm{v}_k \leftarrow w \bm{v}_k + c_1 \bm{r}_1 (\mathcal{L}_k - \mathcal{D}_k)  + c_2 \bm{r}_2 (\mathcal{G} - \mathcal{D}_k), \; \bm{r}_1,\bm{r}_2 \distas{i.i.d.} U[0,1]^{np}$ \Comment{Update vel.}
\stb $\mathcal{D}_k \leftarrow \mathcal{D}_k + \bm{v}_k$ \Comment{Move particle towards best positions}
\If{$h(\mathcal{D}_k) < h(\mathcal{L}_k)$} $\mathcal{L}_k \leftarrow \mathcal{D}_k$ \Comment{Update local-best designs}
\EndIf
\If{$h(\mathcal{D}_k) < h(\mathcal{G})$} $\mathcal{G} \leftarrow \mathcal{D}_k$ \Comment{Update global-best design}
\EndIf
\EndFor
\EndFor
\stb \Return global-best design $\mathcal{G}$.
\EndFunction
\end{algorithmic}
\end{algorithm}

Particle swarm optimization \citep{EK1995}, or PSO for short, is a stochastic, derivative-free algorithm for global minimization of a general function $h$. This algorithm can be described as follows. First, a representative set of $s$ feasible solutions, or a \textit{swarm} of \textit{particles}, is chosen. Each particle is then guided towards the solution with lowest objective encountered along its own path (called the \textit{local-best solution}), as well as the solution with lowest objective over the entire swarm (called the \textit{global-best solution}). In this sense, PSO mimics the behavior of a bird flock searching for food: each bird naturally flies towards the closest position to a food source explored by the flock, but is also guided by the closest position explored along its own flight. When the optimization problem at hand has some desirable structure, PSO can be combined (or \textit{hybridized}) with other algorithms to provide quicker convergence. We therefore propose a hybridization scheme below which combines PSO with the minimax clustering algorithm \textsc{mMc}.

The details are as follows. First, generate the set of approximating points $\{\bm{y}_j\}_{j=1}^N$ using a Sobol' sequence, and generate the $s$ initial designs (forming the \textit{particle swarm}) using scrambled Sobol' sequences \citep{Owe1995}. In non-technical terms, these scrambled sequences provide different initial designs in the swarm, with each retaining its low-discrepancy property. Next, repeat the following steps:
\bi
\setlength\itemsep{0em}
\item For each design particle, do one iteration of minimax clustering.
\item Move each design particle towards to its local-best and global-best designs.
\item Update the local-best and global-best designs for the desired objective in \eqref{eq:clustk}.
\ei
Finally, as a post-processing step, the general version of PSO described previously is applied to the minimax objective \eqref{eq:minimax}, with $\mathcal{X}$ approximated by $\{\bm{y}_j\}_{j=1}^N$. The above procedure, which we call \textsc{mMc-PSO}, is detailed in Algorithm \ref{alg:mMcPSO}. \textsc{mMc-PSO} will be used to generate the minimax designs in our simulations later.

Three parameters are used to control the PSO behavior of \textsc{mMc-PSO}: $c_1$ and $c_2$, which account for the velocities at which each particle drifts towards its local-best and global-best solutions respectively, and $w$, which controls each particle's momentum from one iteration to the next. For the PSO of Lloyd's algorithm proposed by \cite{VE2003}, the authors recommend the setting of $w = 0.72$ and $c_1 = c_2 = 1.49$, which can be shown to provide quick empirical convergence. Since this variant is similar to \textsc{mMc-PSO}, we adopt the same choices here. Other settings have also been tested, but we found this setting to provide the best minimax performance.

\begin{figure}[t]
\centering
\includegraphics[scale=0.60]{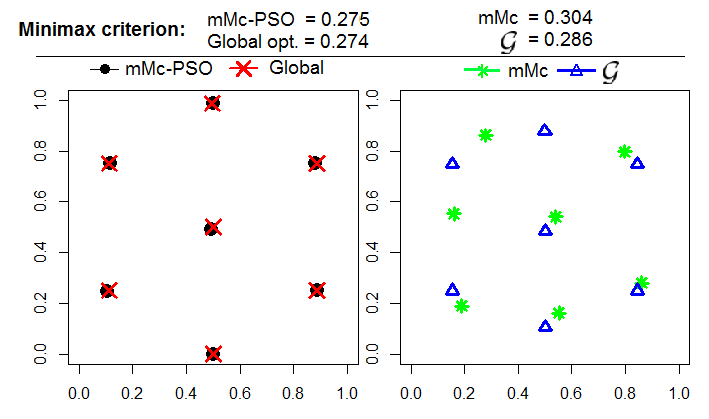}
\caption{(Left) The 7-point design using \textsc{mMc-PSO} and the global minimax design in \cite{Jea1990}. Since these designs are nearly identical, this demonstrates the near-global minimax performance of \textsc{mMc-PSO}. (Right) The 7-point design using \textsc{mMc} and the global-best design $\mathcal{G}$ in \textsc{mMc-PSO} before post-processing. The reduction in minimax distance for the latter design highlights the need for PSO.}
\label{fig:Jea}
\end{figure}

To illustrate the ability for \textsc{mMc-PSO} to generate near-global minimax designs, we compare the 7-point design for $p=2$ from \textsc{mMc-PSO} with the global minimax design in \cite{Jea1990}. Here, $N = 10^5$ approximating points are used, along with $s = 10$ PSO particles. The maximum iteration counts are set at $t_{mMc}=300$ and $t_{pp}=300$. The left plot in Figure \ref{fig:Jea} compares the design generated by \textsc{mMc-PSO} with the global minimax design. Visually, these two designs are nearly identical. Objective-wise, the minimax distance \eqref{eq:minimax} for \textsc{mMc-PSO} is within 0.001 of the global minimum, suggesting that the proposed algorithm indeed provides near-global optimization of \eqref{eq:minimax}. Similar results also hold for the remaining designs in \cite{Jea1990}, but these are not reported for brevity.

The right plot in Figure \ref{fig:Jea}, which outlines the 7-point design from Algorithm \ref{alg:mMc} (minimax clustering \textit{without} PSO) and the global-best design $\mathcal{G}$ in \textsc{mMc-PSO} \textit{before} post-processing, highlights the effectiveness of both PSO and post-processing. From this figure, $\mathcal{G}$ clearly gives a better approximation of the global design than mMc, both visually and criterion-wise, which suggests that the proposed PSO for minimax clustering is indeed effective. However, there is one glaring problem with $\mathcal{G}$: design points are pushed away from the boundaries of $[0,1]^2$, whereas two design points can be found on the top and bottom boundaries for the global minimax design. The post-processing step on $\mathcal{G}$, which performs PSO directly on the minimax criterion \eqref{eq:minimax}, allows design points to move towards their globally optimal positions on design boundaries.

\section{Numerical simulations}

In this section, we compare the minimax performance of designs using \textsc{mMc-PSO} with the existing methods in Section 2.1.  The comparison is first made on the unit hypercube $[0,1]^p$, then on the unit simplex and ball. This section concludes by returning to the original motivating example on air quality monitoring.

\subsection{Minimax designs on $[0,1]^p$}

\begin{figure}[!t]
\centering
\includegraphics[scale=0.60]{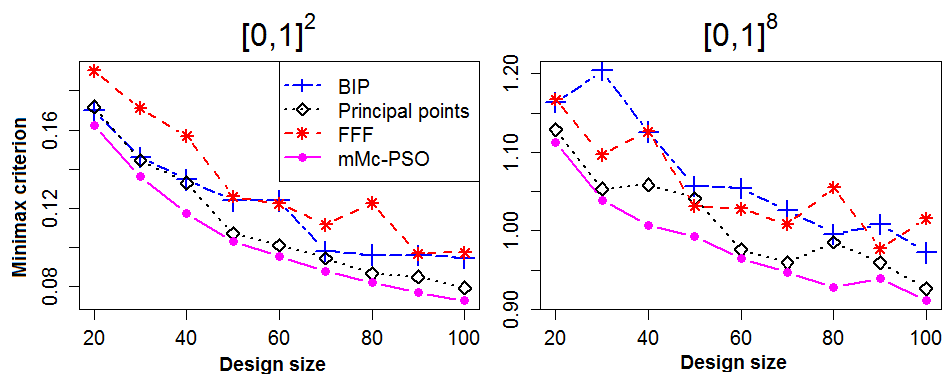}
\caption{Minimax criterion for various design sizes on $[0,1]^2$ and $[0,1]^8$. Designs generated by \textsc{mMc-PSO} consistently give the lowest minimax distance for all design sizes.}
\label{fig:minimaxu}
\end{figure}
\begin{figure}[t]
\centering
\includegraphics[width=\textwidth]{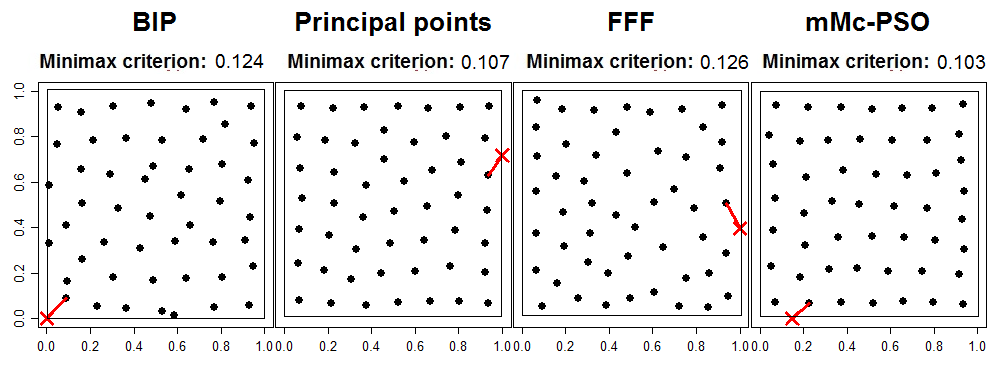}
\caption{Four different 50-point designs for $[0,1]^2$. The red line on each plot connects the point in $[0,1]^2$ furthest from the design (marked by `x') to its nearest design point, with its length equal to the minimax criterion. The proposed method \textsc{mMc-PSO} again provides the best minimax design.}
\label{fig:50u}
\end{figure}

We first illustrate the minimax performance and computation time of \textsc{mMc-PSO} on the unit hypercube $[0,1]^p$ in $p=2, 4, 6$ and $8$ dimensions. For brevity, only results for $p=2$ and $p=8$ are reported here, with additional results deferred to supplementary materials. The simulation settings are as follows. For \textsc{mMc-PSO}, we generate $n=20, 30, \cdots, 100$-point designs using $s=10$ PSO particles with $N=10^5$ approximating points. The maximum iterations in Algorithm \ref{alg:mMcPSO} are set at $t_{mMc}=500$ and $t_{pp}=250$. Our implementation of \textsc{mMc-PSO} is written in C++, and is available in the R package \texttt{minimaxdesign} \citep{Mak2016} in CRAN. For principal points, $N = 10^5$ approximating points are also used to provide a fair comparison with \textsc{mMc-PSO}. Lastly, for BIP, designs of the same sizes are generated with the candidate set taken from the first 1,000 points of the Sobol' sequence. FFF designs are also generated from JMP 12 using the cluster centers option.

For each design, Figure \ref{fig:minimaxu} plots the minimax criterion \eqref{eq:minimax} with $\mathcal{X} = [0,1]^p$ approximated by the first $10^7$ points from the Sobol' sequence. For $p = 2$, designs generated using \textsc{mMc-PSO} have the lowest minimax distance of the four methods for all design sizes $n$, which shows the proposed method indeed provides better minimax designs compared to existing methods. FFF designs, on the other hand, have the largest minimax distance for nearly all design sizes. Surprisingly, designs generated using BIP also have large minimax distances, suggesting that a candidate set of 1,000 design points is insufficient for representing the unit hypercube even in 2 dimensions. On the other hand, even though principal points provide relatively higher minimax distance compared to \textsc{mMc-PSO}, it is consistently better than BIP or FFF. Hence, although principal points are not intended for minimax use, the minimax performance of these designs can be quite good. From Figure \ref{fig:50u}, which plots the 50-point designs for the four methods, principal points and \textsc{mMc-PSO} also enjoy a more visually uniform coverage of $[0,1]^2$ compared to FFF and BIP.

From the right plot of Figure \ref{fig:minimaxu}, similar results hold for $p=8$ as well. \textsc{mMc-PSO} again provides the best minimax designs, with the improvement gap in minimax distance greater than that for $p=2$. This suggests that \textsc{mMc-PSO} provides an increasing improvement over existing methods as dimension $p$ increases. A contributing factor is the ability for \textsc{mMc-PSO} to manipulate a larger number of approximating points $N$ compared to FFF or BIP, an observation which was made in Section 3.2.2. This then allows the proposed algorithm to provide better minimax designs in high-dimensions.

\begin{figure}[t]
\includegraphics[scale=0.57]{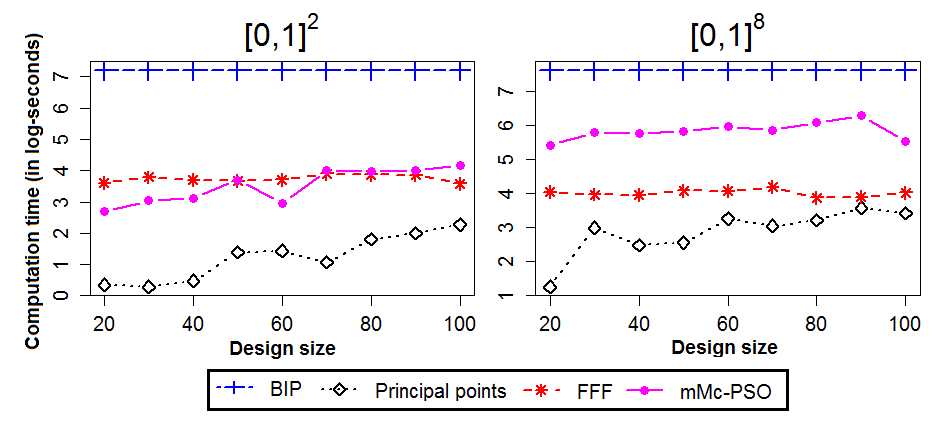}
\caption{Time (in log-seconds) required for generating designs on $[0,1]^p$. The computation times for \textsc{mMc-PSO} are slightly higher than principal points and FFF, but lower than BIP.}
\label{fig:time}
\end{figure}

For computation time, Figure \ref{fig:time} plots the time (in log-seconds) required for each of the four methods, with computation performed on a 6-core 3.2 Ghz desktop computer. Since the BIP optimization in \eqref{eq:Tan} searches for the \textit{smallest design} for a fixed minimax criterion, instead of the \textit{smallest criterion} for a fixed design size, the timing for each BIP design is instead reported as the average time needed to generate all $n = 20, 30, \cdots, 100$-point designs. From Figure \ref{fig:time}, the computation time for \textsc{mMc-PSO} appears to be quite reasonable. For $p=2$, this time ranges from 15 to 90 seconds, whereas for $p=8$, this time ranges from 4 to 8 minutes. Not surprisingly, BIP takes the longest computation time, requiring nearly 30 minutes for each design. FFF designs can be computed faster than \textsc{mMc-PSO}, but provide inferior minimax performance since fewer approximating points can be used. Lastly, although principal points provide higher minimax distances than \textsc{mMc-PSO}, they can be computed the quickest of the four methods. These points can therefore be used as crude minimax designs when computation time is limited.

\subsection{Minimax designs on convex and bounded sets}

Next, we investigate the minimax performance of \textsc{mMc-PSO} for other convex and bounded design regions. Although much of existing literature considers designs on $[0,1]^p$, designs on other design regions are also of practical importance. For example, in studying the effects of temperature and pressure on injection molding, a hypercube design may be inappropriate since, from an engineering perspective, regions with high temperature and pressure may cause combustion of molding material, and experimental runs allocated in these regions therefore become wasted. \textsc{mMc-PSO} can be easily modified to generate minimax designs on design regions $\mathcal{X}$ which are convex and bounded. Convexity of $\mathcal{X}$ is necessary, since it ensures the $C_q$-centers updates in \textsc{mMc-PSO} remain in $\mathcal{X}$.

As mentioned previously, the key reason for using low-discrepancy sequences as the representative sample $\{\bm{y}_j\}_{j=1}^N$ is because such sequences provide a better approximation of the integral in \eqref{eq:minimaxk}. The question is how to generate these sequences for non-hypercube design regions, and to this end, this section is divided into two parts. First, when the Rosenblatt inverse transform for $U(\mathcal{X})$ (defined later) is easy to compute, there is an easy way to generate such sequences on $\mathcal{X}$. We illustrate this by computing minimax designs on the unit simplex and ball. When this transform is difficult to compute, uniform random sampling can be used as a last resort. This latter scenario is demonstrated using the motivating air quality example in Section 2.2.

\subsubsection{Minimax clustering using the Rosenblatt transform}

We begin by first defining the {Rosenblatt transform} $t_{\mathcal{X}}$:
\begin{definition}
Let $\mathcal{X} \subseteq \mathbb{R}^p$, and define the random vector $\bm{X} = (X_1, \cdots, X_p) \sim U(\mathcal{X})$. The Rosenblatt transform is defined as the transform $t_{\mathcal{X}}: \mathbb{R}^p \rightarrow \mathbb{R}^p$ satisfying:
\be
(x_1, \cdots, x_p) \mapsto (y_1, \cdots, y_p), \; \text{where} \; y_1 = F_1(x_1), \; y_i = F_i(x_i | x_1, \cdots, x_{i-1}), \quad i = 2, \cdots, p,
\label{eq:rosenblatt}
\ee
where $F_1(\cdot)$ is the distribution function (d.f.) of $X_1$, and $F_i(\cdot|x_1, \cdots, x_{i-1})$ is the conditional d.f. of $X_i$ given $X_1, \cdots, X_{i-1}$.
\end{definition}
It can be shown \citep{FW1993} that the inverse Rosenblatt transform of a low-discrepancy sequence on $[0,1]^p$ also has low-discrepancy on $\mathcal{X}$. Hence, when $t_{\mathcal{X}}^{-1}$ can be easily computed, minimax designs can be generated with Algorithm \ref{alg:mMcPSO} by simply taking the representative points $\{\bm{y}_j\}_{j=1}^N$ as the inverse transform of a Sobol' sequence.

Fortunately, when $\mathcal{X}$ is regularly-shaped, closed-form equations exist for the inverse Rosenblatt transform $t_{\mathcal{X}}^{-1}$. Transforms for common geometric shapes can be found in \cite{FW1993}. Using these equations, we generate minimax designs for the two regions:
\ben
\setlength\itemsep{0em}
\item The \textit{unit simplex} in $\mathbb{R}^p$: $A_p \equiv \{(x_1, \cdots, x_p) \in \mathbb{R}^p \; : \; 0 \leq x_1 \leq \cdots \leq x_p \leq 1\},$
\item The \textit{unit ball} in $\mathbb{R}^p$: $B_p \equiv \{(x_1, \cdots, x_p) \in \mathbb{R}^p \; : \; x_1^2 + \cdots + x_p^2 \leq 1 \}.$
\een

\begin{figure}[t]
\centering
\includegraphics[scale=0.55]{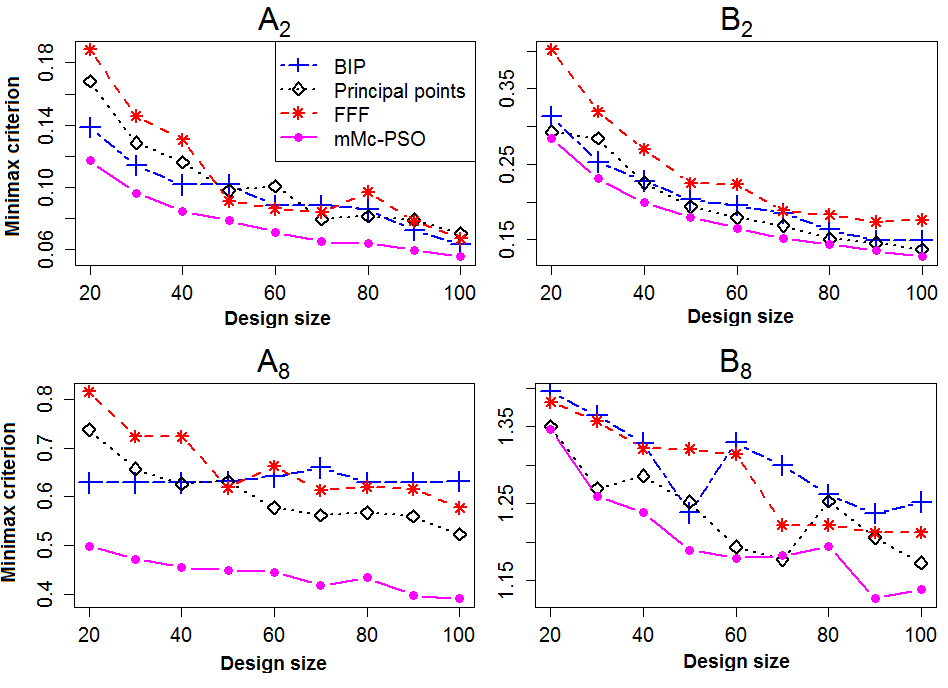}
\caption{Minimax criterion for various design sizes on $A_2$, $B_2$, $A_8$ and $B_8$. Designs from \textsc{mMc-PSO} consistently give the lowest minimax distance for nearly all design sizes.}
\label{fig:convcomp}
\end{figure}

\begin{figure}[t]
\includegraphics[scale=0.58]{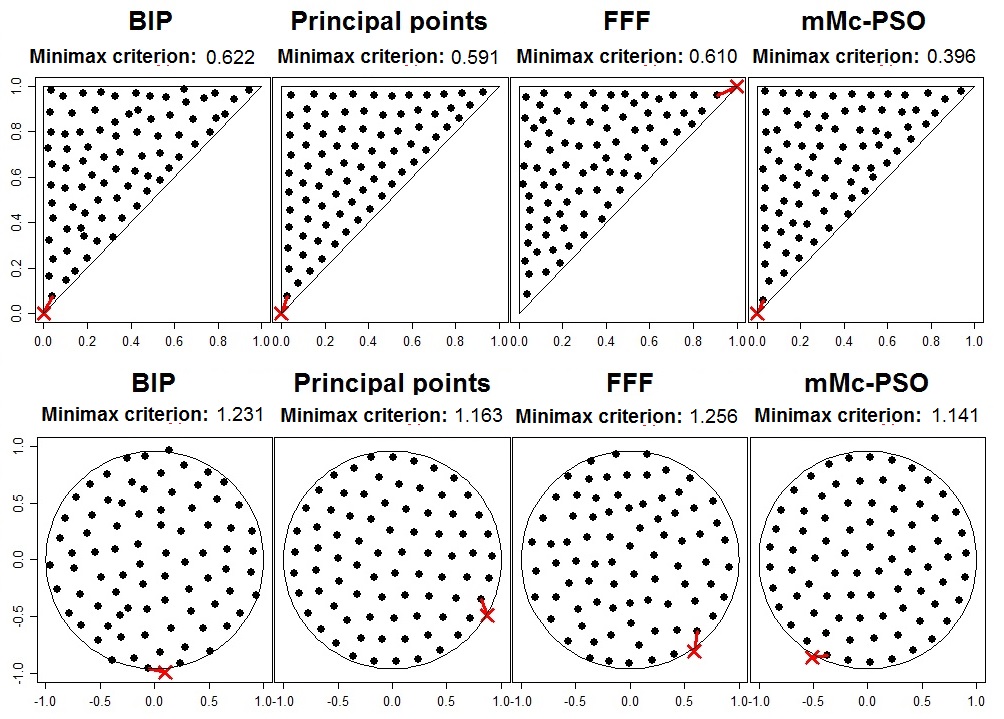}
\caption{Four different 80-point designs for $A_2$ and $B_2$. The red line connects the point in $\mathcal{X}$ furthest from the design (marked by `x') to its nearest design point, with its length equal to the minimax criterion. The proposed method \textsc{mMc-PSO} again provides the best minimax designs.}
\label{fig:80conv}
\end{figure}

The simulation settings are the same as before, with the exception that the candidate set for BIP is taken as the inverse transform of the first 1,000 points of a Sobol' sequence. Figure \ref{fig:convcomp} plots the minimax criterion of designs for $p=2$ and $p=8$, and Figure \ref{fig:80conv} plots the corresponding 80-point designs. Two interesting observations can be made. First, for both $p=2$ and $p=8$, \textsc{mMc-PSO} provides the best minimax designs for every design size $n$, which confirms the superiority of the proposed method in both low and high dimensions. Second, compared to principal points, \textsc{mMc-PSO} performs much better for the unit simplex $A_p$ compared to the unit ball $B_p$. This can be intuitively justified by the fact that both the arithmetic mean and $C_\infty$-center of a unit ball correspond to the same point, the center of the ball. However, when the design region is highly asymmetric, these two centers can indeed be quite different, which explains the sizable improvement of \textsc{mMc-PSO} over principal points for the unit simplex $A_p$.

\subsubsection{Back to the motivating example}

When $\mathcal{X}$ is irregularly-shaped, the inverse transform $t_{\mathcal{X}}^{-1}$ can be difficult to compute. In this case, the approximating points $\{\bm{y}_j\}_{j=1}^N$ can be generated using uniform random sampling on $\mathcal{X}$. We illustrate this using the earlier example of air quality monitoring in the state of Georgia. Note that, while the state of Georgia is not convex, it is ``convex enough'' to ensure $C_q$-centers remain in $\mathcal{X}$, so the proposed method can still be applied. 

\begin{figure}[t]
\centering
\begin{minipage}{0.47\textwidth}
\centering
\includegraphics[scale=0.55]{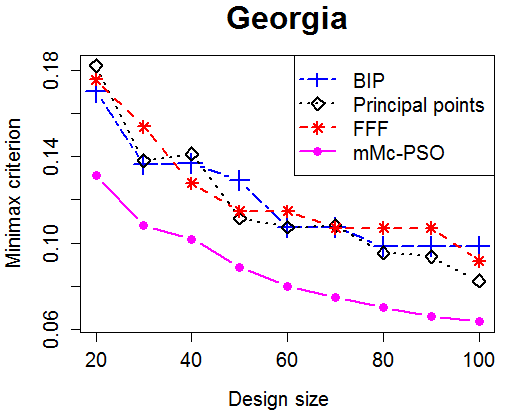}
\captionof{figure}{Minimax criterion for various design sizes on Georgia. Designs from \textsc{mMc-PSO} give the best minimax designs for all design sizes.}
\label{fig:GAcomp}
\end{minipage}
\hspace{0.1cm}
\begin{minipage}{0.47\textwidth}
\centering
\includegraphics[scale=0.60]{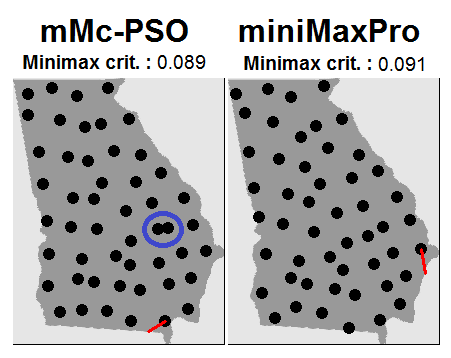}
\captionof{figure}{50-point designs on Georgia using \textsc{mMc-PSO} and \textsc{miniMaxPro}. The refinement step in the latter corrects some visual non-uniformities in the former design (circled in blue).}
\label{fig:pp}
\end{minipage}
\end{figure}

Figure \ref{fig:GAcomp} compares the minimax performance of $n = 20, 30, \cdots, 100$-point designs generated on Georgia, with the 20-point designs plotted in Figure \ref{fig:pp}. The simulation settings used here are the same as before. From the first figure, the minimax performance of \textsc{mMc-PSO} is sizably lower than existing methods for all design sizes, which illustrates the effectiveness of the proposed algorithm. One caveat of \textsc{mMc-PSO}, however, is that the generated designs appear visually non-uniform. For example, the 50-point design from \textsc{mMc-PSO} in the left plot of Figure \ref{fig:pp} shows several design points huddled closely together (such as the pair of points circled in blue), despite the design having a low minimax distance. One way to improve visual uniformity is to improve the uniformity of the design when projected onto the horizontal or vertical axis. This can be accomplished by performing the refinement step introduced in the following section. The right design in Figure \ref{fig:pp}, obtained by applying this refinement to the left design, is more visually uniform compared to the original design, despite having a slightly larger minimax distance. Users should therefore apply this refinement depending on whether visual uniformity or minimaxity is desired.

\section{Minimax projection designs}
As mentioned previously, minimax designs minimize the worst-case prediction error in computer experiment emulation \citep{Jea1990}. However, when a computer experiment has a large number of input variables, minimax designs as defined in \eqref{eq:minimax} may not be appropriate. This is because, by the \textit{effect sparsity} principle \citep{WH2011}, only a few of these inputs are expected to be active. Emulator designs in high dimensions should therefore provide not only good minimax performance on the full space $\mathcal{X}$, but also for \textit{projected subspaces} of $\mathcal{X}$. Recent developments in this vein include the MaxPro designs proposed by \cite{Jea2015}, which minimize the criterion:
\be
\sum_{i=1}^{n-1} \sum_{j=i+1}^n \frac{1}{d_{prod}(\bm{m}_i,\bm{m}_j)}, \quad d_{prod}(\bm{m}_i,\bm{m}_j) = \prod_{k=1}^p (m_{ik}-m_{jk})^2,
\label{eq:mp}
\ee
where $\bm{m}_i = (m_{i1}, \cdots, m_{ip})$ denotes the $i$-th design point. Extending this idea, we present below a new type of design called {minimax projection designs}, which are obtained by refining the minimax design from \textsc{mMc-PSO} using the MaxPro criterion in \eqref{eq:mp}.

\begin{minipage}{0.50\textwidth}
\begin{algorithm}[H]
\caption{Minimax projection designs}
\label{alg:mp}
\begin{algorithmic}[1]
\small
\Function{miniMaxPro}{$\cdots$}\\
\Comment{$\cdots$ - \textsc{mMc-PSO} params.}
\stb Generate an $n$-point minimax design $\mathcal{D} = \{\bm{m}_i\}_{i=1}^n \leftarrow$ \textsc{mMc-PSO}($\cdots$).
\Repeat
\For{$i = 1, \cdots, n$}
\stb Update $\{d_i\}_{i=1}^n$ in \eqref{eq:di}.
\stb Update $d^* = \max_{i} d_i$. 
\stb Update $\bm{m}_i$ by \eqref{eq:blockmp}.
\EndFor
\Until{design points converge.}
\stb \Return miniMaxPro design $\{\bm{m}_i\}_{i=1}^n$.
\EndFunction
\end{algorithmic}
\end{algorithm}
\end{minipage}
\hspace{0.1cm}
\begin{minipage}{0.45\textwidth}
\centering
\includegraphics[scale=0.62]{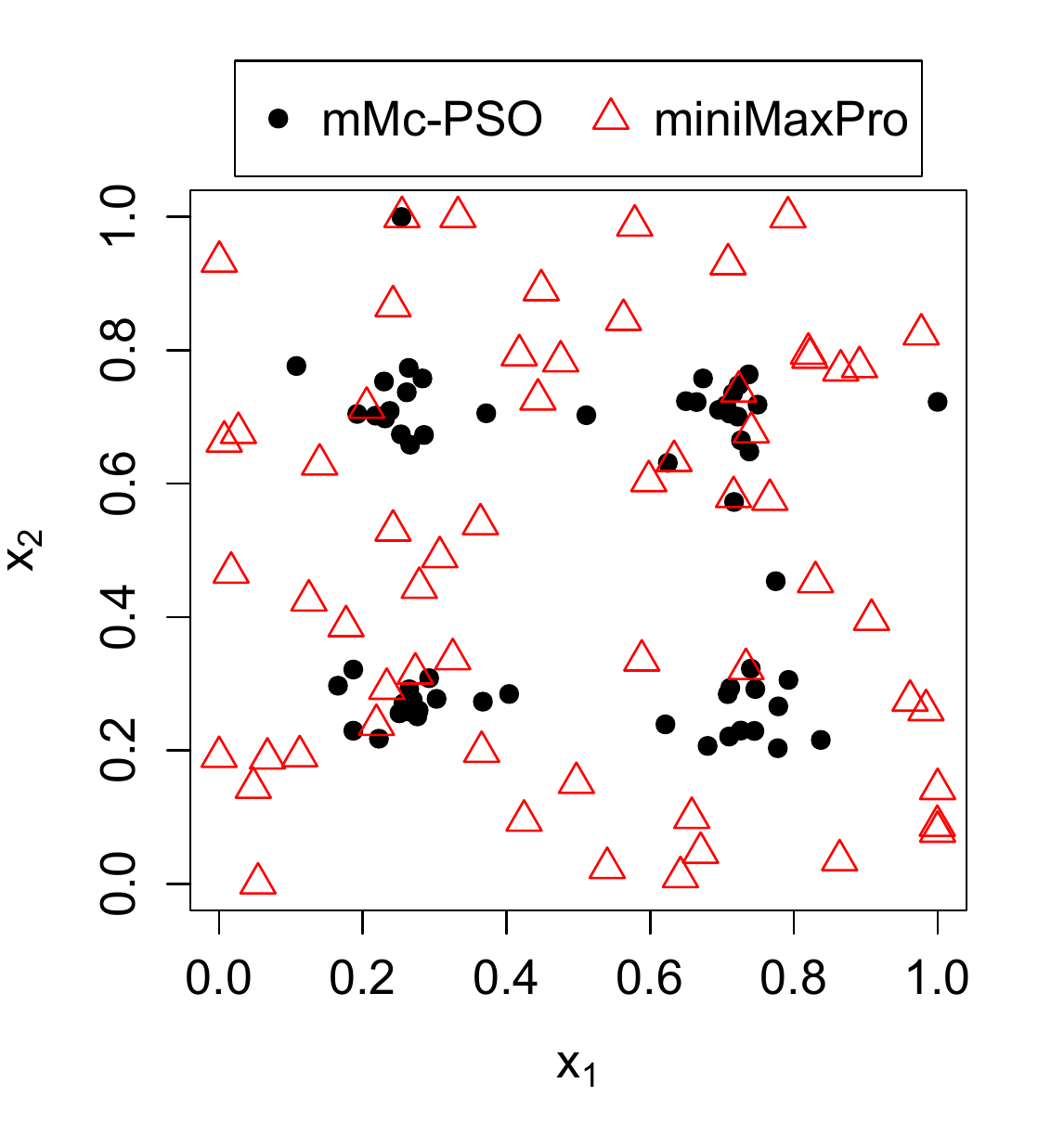}
\vspace{-1.5cm}
\captionof{figure}{A 2-d projection of 60-point \textsc{mMc-PSO} and {miniMaxPro} designs. The refinement step in \textsc{miniMaxPro} improves projected minimaxity.}
\label{fig:projdes}
\end{minipage}
\vspace{0.3cm}

In words, this refinement step \textit{improves} projected minimaxity while \textit{maintaining} the low minimax distance of the original \textsc{mMc-PSO} design. The details are as follows. Let $\mathcal{D}=\{\bm{m}_i\}_{i=1}^n$ be the design generated by \textsc{mMc-PSO}. Define the minimax distance of \textit{each} design point $\bm{m}_i$ as:
\be
d_i = \sup_{\bm{x} \in \mathcal{X}_i} \|\bm{x}-\bm{m}_i\|, \quad \text{where } \mathcal{X}_i = \{ \bm{x} \in \mathcal{X} : \|\bm{x}-\bm{m}_i\| \leq  \|\bm{x}-\bm{m}_j\|, \; \forall j = 1, \cdots, n\}
\label{eq:di}
\ee
is the collection of points in $\mathcal{X}$ closest in distance in $\bm{m}_i$. Note that the overall minimax distance in \eqref{eq:minimax} is simply the maximum of these distances, $d^* = \max_{i=1, \cdots, n} d_i$. For each point $\bm{m}_i$, the refinement step consist of two parts. First, compute the minimax distances $\{d_i\}_{i=1}^n$ and $d^*$. Next, update $\bm{m}_i$ by the optimization:
\be
\bm{m}_i \leftarrow \argmin_{\bm{m} \in \mathbb{R}^p}\sum_{j=1, j \neq i}^n \frac{1}{d_{prod}(\bm{m},\bm{m}_i)} \quad \text{s.t.} \quad \|\bm{m} - \bm{m}_i\| \leq d^* - d_i, \; \bm{m}_i \in \mathcal{X}.
\label{eq:blockmp}
\ee
This update can be viewed as the block-wise minimization of the MaxPro criterion \eqref{eq:mp} for the $i$-th design point $\bm{m}_i$, with the constraint $\|\bm{m} - \bm{m}_i\| \leq d^* - d_i$ ensuring the updated point is sufficiently close to the previous point. In our implementation, \eqref{eq:blockmp} is computed using the R package \texttt{nloptr} \citep{Ypm2014}. Repeating this two-stage refinement for each design point until convergence gives a point set which enjoys good space-filling properties after projections. Algorithm \ref{alg:mp} summarizes the detailed steps for generating this so-called minimax projection (miniMaxPro) design.

An appealing feature of miniMaxPro designs is that its projective space-fillingness does not come at a cost of increased minimax distance! That is, the minimax distance of the converged miniMaxPro design has the \textit{same} minimax distance on $\mathcal{X}$ as the original design from \textsc{mMc-PSO}. This is stated formally in the following proposition:
\begin{proposition}
When $\{d_i\}_{i=1}^n$ and $d^*$ are computed exactly, the two-stage refinement in lines 6 - 8 of Algorithm \ref{alg:mp} does not increase the minimax distance of $\mathcal{D}$ in line 2.
\label{prop:inv}
\end{proposition}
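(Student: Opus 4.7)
The plan is to show the proposition by a direct triangle-inequality argument, proving that a single update in lines 6--8 of Algorithm \ref{alg:mp} cannot increase the minimax criterion, and then iterating this over all $i$ and all outer passes.

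Fix an iteration of the inner for-loop and let $i$ denote the design index being updated. Write $\bm{m}_i^{\text{old}}$ for the current position of the $i$-th design point, $\bm{m}_i^{\text{new}}$ for the updated point returned by \eqref{eq:blockmp}, and let $\mathcal{D}^{\text{old}}$ and $\mathcal{D}^{\text{new}}$ denote the design before and after this single update. Since $\{d_j\}_{j=1}^n$ and $d^*$ are assumed to be computed exactly at the start of this iteration, we have $\|\bm{x} - \bm{m}_j\| \leq d_j \leq d^*$ for every $\bm{x} \in \mathcal{X}_j$, $j = 1, \dots, n$, and the feasibility constraint in \eqref{eq:blockmp} yields $\|\bm{m}_i^{\text{new}} - \bm{m}_i^{\text{old}}\| \leq d^* - d_i$.

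Next I would take an arbitrary $\bm{x} \in \mathcal{X}$ and bound the distance from $\bm{x}$ to its nearest point in $\mathcal{D}^{\text{new}}$ by splitting on which Voronoi cell $\bm{x}$ belongs to in $\mathcal{D}^{\text{old}}$. If $\bm{x} \in \mathcal{X}_j$ for some $j \neq i$, then $\bm{m}_j$ is unchanged and already in $\mathcal{D}^{\text{new}}$, so the nearest-neighbor distance is at most $\|\bm{x} - \bm{m}_j\| \leq d_j \leq d^*$. If instead $\bm{x} \in \mathcal{X}_i$, then by the triangle inequality
\begin{equation*}
\|\bm{x} - \bm{m}_i^{\text{new}}\| \;\leq\; \|\bm{x} - \bm{m}_i^{\text{old}}\| + \|\bm{m}_i^{\text{old}} - \bm{m}_i^{\text{new}}\| \;\leq\; d_i + (d^* - d_i) \;=\; d^*,
\end{equation*}
so again the nearest-neighbor distance in $\mathcal{D}^{\text{new}}$ is at most $d^*$. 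Taking the supremum over $\bm{x} \in \mathcal{X}$ gives that the minimax distance of $\mathcal{D}^{\text{new}}$ is at most $d^*$, the minimax distance of $\mathcal{D}^{\text{old}}$.

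Finally, I would observe that this single-step argument immediately extends to the full refinement: after each update, the quantities $\{d_j\}$ and $d^*$ are recomputed, and the above argument applies again with the new constraint $\|\bm{m} - \bm{m}_i\| \leq d^*_{\text{new}} - d_{i,\text{new}} \geq 0$ (feasibility is preserved because $\bm{m}_i^{\text{old}}$ itself is feasible, so the block-wise optimum can only decrease the MaxPro objective). Hence the minimax distance is monotonically non-increasing across the entire repeat-until loop, which is exactly the statement of Proposition \ref{prop:inv}. The only subtle point worth checking carefully is feasibility of the constraint in \eqref{eq:blockmp} at every step, i.e., that $d^* - d_i \geq 0$ always holds, which is immediate from $d^* = \max_j d_j \geq d_i$; no other obstacle arises, and I expect the proof to be short.
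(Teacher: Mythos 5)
Your proof is correct and follows essentially the same route as the paper's: the key ingredient in both is the triangle inequality combined with the feasibility constraint $\|\bm{m} - \bm{m}_i\| \leq d^* - d_i$ from \eqref{eq:blockmp}, applied one single-point update at a time and then iterated. Your two-case split on the old Voronoi cell of an arbitrary $\bm{x} \in \mathcal{X}$ is a slightly cleaner organization than the paper's three-case split on the nearest-neighbor assignments in the old and new designs, but the substance is the same.
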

\noindent The proof of this proposition relies on the constraint $\|\bm{m} - \bm{m}_i\| \leq d^* - d_i$ in \eqref{eq:blockmp}; see Appendix for details. In practice, $\{d_i\}_{i=1}^n$ and $d^*$ are \textit{estimated} by appsroximating $\mathcal{X}$ using a finite representative set $\{\bm{y}_m\}_{m=1}^N$ (a Sobol' sequence is used in our implementation), so the overall minimax distance may increase after refinement. However, this increase is quite small when the number of approximating points $N$ is large (i.e., $N=10^5$), as shown in the simulations below.

To illustrate the effectiveness of this refinement, Figure \ref{fig:projdes} plots a two-dimensional projection of the 60-point design from \textsc{mMc-PSO} on $[0,1]^8$ and its corresponding miniMaxPro design. The \textsc{mMc-PSO} design clearly has poor minimax coverage after projection onto this 2-d subspace, with points closely focused around the four points $(0.5\pm 0.25,0.5 \pm 0.25)$. The miniMaxPro design, on the other hand, exhibits much better minimax performance after projection, which shows the refinement performs as intended.

Since one use of miniMaxPro designs is for computer experiment emulation, we compare its performance with two existing computer experiment designs: the MaxPro design \citep{Jea2015} and the FFF design \citep{Lea2014}. Three metrics are used to evaluate projective space-fillingness: $\text{mM}_k$, $\text{avg}_k$ and $\text{Mm}_k$, which are defined as:
\begin{align*}
\centering
\text{mM}_k &= \max_{r=1, \cdots, {p \choose k}} \sup_{\bm{x} \in \mathcal{P}_r(\mathcal{X})} \left\{ \frac{1}{n} \sum_{i=1}^n \frac{1}{\| \bm{x} - \mathcal{P}_r\bm{m}_i \|^{2k}}\right\}^{-1/(2k)},\\
\text{avg}_k &= \max_{r=1, \cdots, {p \choose k}} \int_{\mathcal{P}_r(\mathcal{X})} \| \bm{x} - Q(\bm{x},\{\mathcal{P}_r\bm{m}_i\}_{i=1}^n) \| \; d\bm{x} \text{ and }\\
\text{Mm}_k &= \min_{r=1, \cdots, {p \choose k}} \frac{1}{{n \choose 2}} \left\{ \sum_{i=1}^{n-1} \sum_{j=i+1}^n \frac{1}{\|\mathcal{P}_r\bm{m}_i - \mathcal{P}_r\bm{m}_j \|^{2k}} \right\}^{-1/(2k)}.
\end{align*}
Here, $r = 1, \cdots, {p \choose k}$ enumerates all projections of $\mathcal{X} \subseteq \mathbb{R}^p$ onto a subspace of dimension $k$, with $\mathcal{P}_r$ its corresponding projection operator. The metrics $\text{mM}_k$ and $\text{Mm}_k$ were proposed in \cite{Jea2015} to incorporate the minimax and maximin index of the design when projected into $k$ dimensions. The last metric $\text{avg}_k$ measures the average distance to a design point when projected into $k$ dimensions. Larger values of $\text{Mm}_k$ suggest better space-fillingness in terms of maximin, whereas smaller values of $\text{mM}_k$ and $\text{avg}_k$ indicate better space-fillingness in terms of minimax and average distance, respectively.

\begin{figure}[t]
\centering
\includegraphics[width=\textwidth]{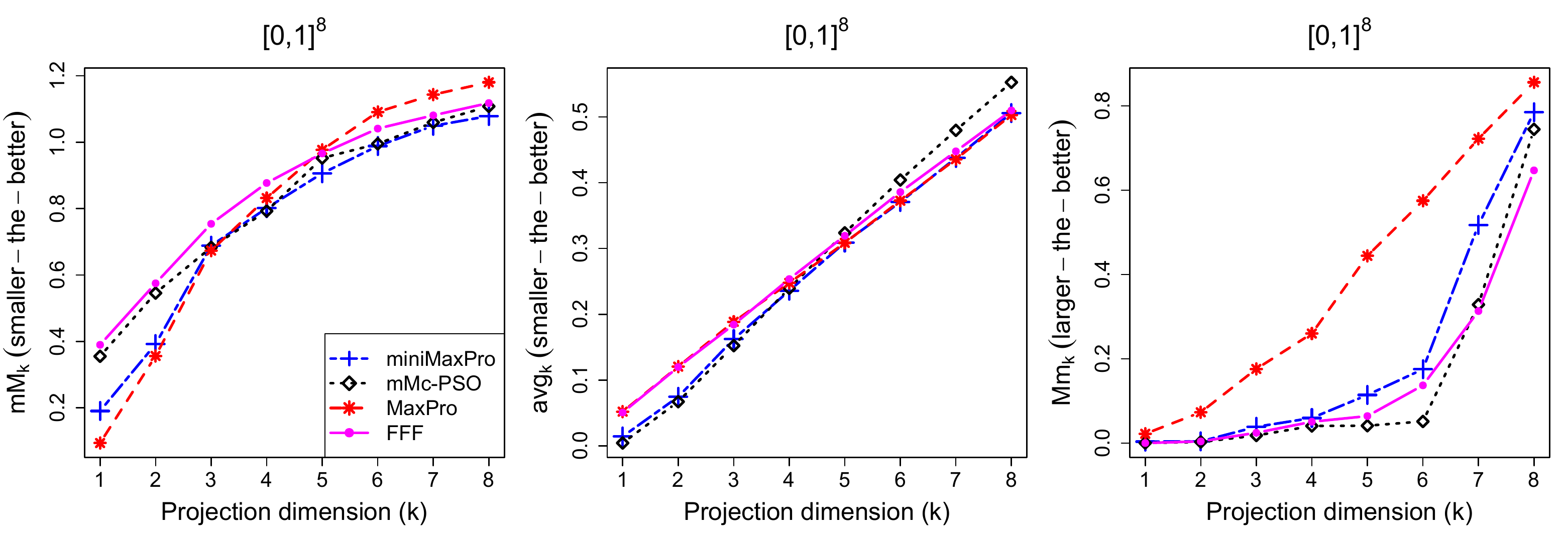}
\caption{$\text{mM}_k$, $\text{avg}_k$ and $\text{Mm}_k$ for four different 60-point designs on $[0,1]^8$. The proposed miniMaxPro design provides the best performance for $\text{mM}_k$ and $\text{avg}_k$, but performs worse for $\text{Mm}_k$.}
\label{fig:projcrit}
\end{figure}

Figure \ref{fig:projcrit} plots $\text{mM}_k$, $\text{avg}_k$ and $\text{Mm}_k$ for the 60-point MaxPro, FFF, miniMaxPro and the design from \textsc{mMc-PSO} (we refer to the latter as simply ``minimax design'' below). Similar results hold for other design sizes, and are not reported for brevity. For the minimax metric $\text{mM}_k$, both the miniMaxPro and minimax designs enjoy sizably improved performance in moderate dimensions ($4 \leq k \leq 8$). In lower dimensions ($1 \leq k \leq 3$), the refinement step for the miniMaxPro design allows it to be comparable with MaxPro. For the average distance metric $\text{avg}_k$, the miniMaxPro design appears to be the best choice over all projection dimensions. For the maximin metric $\text{Mm}_k$, the minimax and miniMaxPro designs give poorer performance to MaxPro. The refinement step for the latter, however, allows for sizable improvements with respect to maximin. To summarize, miniMaxPro designs appear to enjoy an improvement over existing designs in terms of projected minimax and average distance, but this comes at a cost of poorer performance for the projected maximin criterion.

\section{Discussion}

Minimax designs, by \textit{minimizing} the \textit{maximum} distance from any point in the design space $\mathcal{X} \subseteq \mathbb{R}^p$ to its closest design point, provide uniform coverage of $\mathcal{X}$ in the worst-case. Despite its many uses in computer experiments, optimal sensor placement and resource allocation problems, there have been little work on generating these designs efficiently. In this paper, we propose a new algorithm called \textsc{mMc-PSO} for computing minimax designs on convex and bounded design spaces, and demonstrate the efficiency of this method in low and highdimensions. Simulations on the unit hypercube, the unit simplex and ball, and the state of Georgia show that \textsc{mMc-PSO} provides better minimax designs compared to existing methods in literature. A new experimental design, called miniMaxPro designs, can then be constructed by refining the minimax design from \textsc{mMc-PSO} to ensure good projective space-fillingness. 

Despite the developments in this paper, there are still many avenues for further work. One of these is exploring the properties of minimax designs when the Euclidean norm is replaced by another norm for $\| \cdot \|$ in \eqref{eq:minimax}. Pursuing this may reveal better ways for generating designs in high-dimensions with good projective space-filling properties. Another direction is to explore more sophisticated hybridization schemes (e.g., \citealp{KL2002,Zea2009}) for incorporating PSO within clustering algorithms. This allows better minimax designs to be generated using less computational resources. 

\section*{Acknowledgement}
A C++ implementation of \textsc{mMc-PSO} and \textsc{miniMaxPro} is available in the R package \texttt{minimaxdesign} \citep{Mak2016}. This research is supported by the U. S. Army Research Office under grant number W911NF-14-1-0024.

\bibliography{references} 	

\begin{thebibliography}{}

\bibitem[\protect\citeauthoryear{Arthur, Manthey, and Roglin}{Arthur
  et~al.}{2009}]{Aea2009}
Arthur, D., B.~Manthey, and H.~Roglin (2009).
\newblock k-means has polynomial smoothed complexity.
\newblock pp.\  405--414.

\bibitem[\protect\citeauthoryear{Arthur and Vassilvitskii}{Arthur and
  Vassilvitskii}{2007}]{AS2007}
Arthur, D. and S.~Vassilvitskii (2007).
\newblock k-means++: The advantages of careful seeding.
\newblock pp.\  1027--1035.

\bibitem[\protect\citeauthoryear{Bell, Rosenfeld, Kim, Yoo, Lee, and
  Hahnenberger}{Bell et~al.}{2008}]{Bea2008}
Bell, T.~L., D.~Rosenfeld, K.-M. Kim, J.-M. Yoo, M.-I. Lee, and M.~Hahnenberger
  (2008).
\newblock Midweek increase in us summer rain and storm heights suggests air
  pollution invigorates rainstorms.
\newblock {\em Journal of Geophysical Research: Atmospheres\/}~{\em 113\/}(D2).

\bibitem[\protect\citeauthoryear{Ben-Tal and Nemirovski}{Ben-Tal and
  Nemirovski}{2001}]{BN2001}
Ben-Tal, A. and A.~Nemirovski (2001).
\newblock {\em Lectures on modern convex optimization: analysis, algorithms,
  and engineering applications}, Volume~2.
\newblock Siam.

\bibitem[\protect\citeauthoryear{Bertsekas}{Bertsekas}{1999}]{Ber1999}
Bertsekas, D.~P. (1999).
\newblock {\em Nonlinear programming}.
\newblock Athena scientific.

\bibitem[\protect\citeauthoryear{Bhowmick}{Bhowmick}{2009}]{Bho2009}
Bhowmick, A. (2009).
\newblock {\em A theoretical analysis of Lloyd’s algorithm for k-means
  clustering}.
\newblock Ph.\ D. thesis, Department of Computer Science and Engineering,
  Indian Institute of Technology, Kanpur.

\bibitem[\protect\citeauthoryear{Boyd and Vandenberghe}{Boyd and
  Vandenberghe}{2004}]{BV2004}
Boyd, S. and L.~Vandenberghe (2004).
\newblock {\em Convex optimization}.
\newblock Cambridge university press.

\bibitem[\protect\citeauthoryear{Christophe and Petr}{Christophe and
  Petr}{2014}]{DS2013}
Christophe, D. and S.~Petr (2014).
\newblock {\em randtoolbox: Generating and Testing Random Numbers}.
\newblock R package version 1.16.

\bibitem[\protect\citeauthoryear{Cox}{Cox}{1957}]{Cox1957}
Cox, D.~R. (1957).
\newblock Note on grouping.
\newblock {\em Journal of the American Statistical Association\/}~{\em
  52\/}(280), 543--547.

\bibitem[\protect\citeauthoryear{Dalenius}{Dalenius}{1950}]{Dal1950}
Dalenius, T. (1950).
\newblock The problem of optimum stratification.
\newblock {\em Scandinavian Actuarial Journal\/}~{\em 1950\/}(3-4), 203--213.

\bibitem[\protect\citeauthoryear{Eberhart and Kennedy}{Eberhart and
  Kennedy}{1995}]{EK1995}
Eberhart, R.~C. and J.~Kennedy (1995).
\newblock A new optimizer using particle swarm theory.
\newblock ~{\em 1}, 39--43.

\bibitem[\protect\citeauthoryear{Eppstein}{Eppstein}{2000}]{Epp2000}
Eppstein, D. (2000).
\newblock Fast hierarchical clustering and other applications of dynamic
  closest pairs.
\newblock {\em Journal of Experimental Algorithmics (JEA)\/}~{\em 5}, 1.

\bibitem[\protect\citeauthoryear{Fang and Wang}{Fang and Wang}{1993}]{FW1993}
Fang, K.-T. and Y.~Wang (1993).
\newblock {\em Number-theoretic methods in statistics}, Volume~51.
\newblock CRC Press.

\bibitem[\protect\citeauthoryear{Flury}{Flury}{1990}]{Flu1990}
Flury, B.~A. (1990).
\newblock Principal points.
\newblock {\em Biometrika\/}~{\em 77\/}(1), 33--41.

\bibitem[\protect\citeauthoryear{Flury}{Flury}{1993}]{Flu1993}
Flury, B.~D. (1993).
\newblock Estimation of principal points.
\newblock {\em Applied Statistics\/}, 139--151.

\bibitem[\protect\citeauthoryear{Jain}{Jain}{2010}]{Jai2010}
Jain, A.~K. (2010).
\newblock Data clustering: 50 years beyond k-means.
\newblock {\em Pattern recognition letters\/}~{\em 31\/}(8), 651--666.

\bibitem[\protect\citeauthoryear{John, Johnson, Moore, and Ylvisaker}{John
  et~al.}{1995}]{Jea1995}
John, P., M.~Johnson, L.~Moore, and D.~Ylvisaker (1995).
\newblock Minimax distance designs in two-level factorial experiments.
\newblock {\em Journal of Statistical Planning and Inference\/}~{\em 44\/}(2),
  249--263.

\bibitem[\protect\citeauthoryear{Johnson, Moore, and Ylvisaker}{Johnson
  et~al.}{1990}]{Jea1990}
Johnson, M.~E., L.~M. Moore, and D.~Ylvisaker (1990).
\newblock Minimax and maximin distance designs.
\newblock {\em Journal of statistical planning and inference\/}~{\em 26\/}(2),
  131--148.

\bibitem[\protect\citeauthoryear{Joseph, Gul, and Ba}{Joseph
  et~al.}{2015}]{Jea2015}
Joseph, V.~R., E.~Gul, and S.~Ba (2015).
\newblock Maximum projection designs for computer experiments.
\newblock {\em Biometrika\/}~{\em 102}, 371--380.

\bibitem[\protect\citeauthoryear{Krink and L{\o}vbjerg}{Krink and
  L{\o}vbjerg}{2002}]{KL2002}
Krink, T. and M.~L{\o}vbjerg (2002).
\newblock The lifecycle model: combining particle swarm optimisation, genetic
  algorithms and hillclimbers.
\newblock In {\em Parallel Problem Solving from Nature—PPSN VII}, pp.\
  621--630. Springer.

\bibitem[\protect\citeauthoryear{Lekivetz and Jones}{Lekivetz and
  Jones}{2015}]{Lea2014}
Lekivetz, R. and B.~Jones (2015).
\newblock Fast flexible space-filling designs for nonrectangular regions.
\newblock {\em Quality and Reliability Engineering International\/}~{\em
  31\/}(5), 829--837.

\bibitem[\protect\citeauthoryear{Likas, Vlassis, and Verbeek}{Likas
  et~al.}{2003}]{LVV2003}
Likas, A., N.~Vlassis, and J.~J. Verbeek (2003).
\newblock The global k-means clustering algorithm.
\newblock {\em Pattern recognition\/}~{\em 36\/}(2), 451--461.

\bibitem[\protect\citeauthoryear{Linde, Buzo, and Gray}{Linde
  et~al.}{1980}]{LBG1980}
Linde, Y., A.~Buzo, and R.~M. Gray (1980).
\newblock An algorithm for vector quantizer design.
\newblock {\em Communications, IEEE Transactions on\/}~{\em 28\/}(1), 84--95.

\bibitem[\protect\citeauthoryear{Lloyd}{Lloyd}{1957}]{Llo1957}
Lloyd, S. (1957).
\newblock Binary block coding.
\newblock {\em Bell System Technical Journal\/}~{\em 36\/}(2), 517--535.

\bibitem[\protect\citeauthoryear{Lloyd}{Lloyd}{1982}]{Llo1982}
Lloyd, S. (1982).
\newblock Least squares quantization in pcm.
\newblock {\em Information Theory, IEEE Transactions on\/}~{\em 28\/}(2),
  129--137.

\bibitem[\protect\citeauthoryear{Luss}{Luss}{1999}]{Lus1999}
Luss, H. (1999).
\newblock On equitable resource allocation problems: a lexicographic minimax
  approach.
\newblock {\em Operations Research\/}~{\em 47\/}(3), 361--378.

\bibitem[\protect\citeauthoryear{Mak}{Mak}{2016}]{Mak2016}
Mak, S. (2016).
\newblock {\em minimaxdesign: Minimax and Minimax Projection Designs}.
\newblock R package version 0.1.0.

\bibitem[\protect\citeauthoryear{Nesterov}{Nesterov}{1983}]{Nes1983}
Nesterov, Y. (1983).
\newblock A method of solving a convex programming problem with convergence
  rate o (1/k2).
\newblock ~{\em 27\/}(2), 372--376.

\bibitem[\protect\citeauthoryear{Nesterov}{Nesterov}{2007}]{Nes2007}
Nesterov, Y. (2007).
\newblock Gradient methods for minimizing composite objective function.
\newblock Technical report, UCL.

\bibitem[\protect\citeauthoryear{Nesterov}{Nesterov}{2013}]{Nes2013}
Nesterov, Y. (2013).
\newblock {\em Introductory lectures on convex optimization: A basic course},
  Volume~87.
\newblock Springer Science \& Business Media.

\bibitem[\protect\citeauthoryear{Niederreiter}{Niederreiter}{1992}]{Nie1992}
Niederreiter, H. (1992).
\newblock {\em Random number generation and quasi-Monte Carlo methods},
  Volume~63.
\newblock SIAM.

\bibitem[\protect\citeauthoryear{Nielsen and Bhatia}{Nielsen and
  Bhatia}{2013}]{NB2013}
Nielsen, F. and R.~Bhatia (2013).
\newblock {\em Matrix information geometry}.
\newblock Springer.

\bibitem[\protect\citeauthoryear{Nocedal and Wright}{Nocedal and
  Wright}{2006}]{NW2006}
Nocedal, J. and S.~Wright (2006).
\newblock {\em Numerical optimization}.
\newblock Springer Science \& Business Media.

\bibitem[\protect\citeauthoryear{Oser}{Oser}{2016}]{Ose2016}
Oser, D. (2016).
\newblock Georgia air monitoring.
\newblock \url{http://amp.georgiaair.org/}.
\newblock Accessed: 2016-10-15.

\bibitem[\protect\citeauthoryear{Owen}{Owen}{1995}]{Owe1995}
Owen, A.~B. (1995).
\newblock {\em Randomly permuted (t, m, s)-nets and (t, s)-sequences}.
\newblock Springer.

\bibitem[\protect\citeauthoryear{Patan}{Patan}{2012}]{Pat2012}
Patan, M. (2012).
\newblock {\em Optimal sensor networks scheduling in identification of
  distributed parameter systems}, Volume 425.
\newblock Springer Science \& Business Media.

\bibitem[\protect\citeauthoryear{Sobol}{Sobol}{1967}]{Sob1967}
Sobol, I.~M. (1967).
\newblock On the distribution of points in a cube and the approximate
  evaluation of integrals.
\newblock {\em USSR Computational mathematics and mathematical physics\/}~(7),
  86--112.

\bibitem[\protect\citeauthoryear{Tan}{Tan}{2013}]{Tan2013}
Tan, M.~H. (2013).
\newblock Minimax designs for finite design regions.
\newblock {\em Technometrics\/}~{\em 55\/}(3), 346--358.

\bibitem[\protect\citeauthoryear{Tzortzis and Likas}{Tzortzis and
  Likas}{2009}]{TL2009}
Tzortzis, G.~F. and C.~Likas (2009).
\newblock The global kernel-means algorithm for clustering in feature space.
\newblock {\em Neural Networks, IEEE Transactions on\/}~{\em 20\/}(7),
  1181--1194.

\bibitem[\protect\citeauthoryear{van Dam}{van Dam}{2008}]{Van2008}
van Dam, E.~R. (2008).
\newblock Two-dimensional minimax latin hypercube designs.
\newblock {\em Discrete Applied Mathematics\/}~{\em 156\/}(18), 3483--3493.

\bibitem[\protect\citeauthoryear{Van~der Merwe and Engelbrecht}{Van~der Merwe
  and Engelbrecht}{2003}]{VE2003}
Van~der Merwe, D. and A.~P. Engelbrecht (2003).
\newblock Data clustering using particle swarm optimization.
\newblock ~{\em 1}, 215--220.

\bibitem[\protect\citeauthoryear{Vanli, Zhang, Nguyen, and Wang}{Vanli
  et~al.}{2012}]{Vea2012}
Vanli, O.~A., C.~Zhang, A.~Nguyen, and B.~Wang (2012).
\newblock A minimax sensor placement approach for damage detection in composite
  structures.
\newblock {\em Journal of Intelligent Material Systems and Structures\/},
  1045389X12440751.

\bibitem[\protect\citeauthoryear{Ward~Jr}{Ward~Jr}{1963}]{War1963}
Ward~Jr, J.~H. (1963).
\newblock Hierarchical grouping to optimize an objective function.
\newblock {\em Journal of the American statistical association\/}~{\em
  58\/}(301), 236--244.

\bibitem[\protect\citeauthoryear{Wu and Hamada}{Wu and Hamada}{2011}]{WH2011}
Wu, C. F.~J. and M.~S. Hamada (2011).
\newblock {\em Experiments: planning, analysis, and optimization}, Volume 552.
\newblock John Wiley \& Sons.

\bibitem[\protect\citeauthoryear{Ypma}{Ypma}{2014}]{Ypm2014}
Ypma, J. (2014).
\newblock nloptr: R interface to nlopt.
\newblock {\em R package version\/}~{\em 1\/}(4).

\bibitem[\protect\citeauthoryear{Zhan, Zhang, Li, and Chung}{Zhan
  et~al.}{2009}]{Zea2009}
Zhan, Z.-H., J.~Zhang, Y.~Li, and H.~S.-H. Chung (2009).
\newblock Adaptive particle swarm optimization.
\newblock {\em Systems, Man, and Cybernetics, Part B: Cybernetics, IEEE
  Transactions on\/}~{\em 39\/}(6), 1362--1381.

\end{thebibliography}

\pagebreak

\begin{appendices}
\numberwithin{equation}{section}
\counterwithin{figure}{section} 
\setcounter{page}{1}

\section{Proofs}

\subsection*{Proof of Theorem \ref{thm:uni}}

\begin{lemma}
Let $h: \mathbb{R}^p \rightarrow \mathbb{R}_+$ be a strictly convex function, and let $g: \mathbb{R}_+ \rightarrow \mathbb{R}_+$ be a convex and strictly increasing function. Then the composition $g \circ h: \mathbb{R}^p \rightarrow \mathbb{R}_+$ is strictly convex.
\label{lem:convex}
\end{lemma}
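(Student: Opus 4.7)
The plan is a direct chained-inequality argument exploiting (i) strict convexity of $h$, (ii) strict monotonicity of $g$, and (iii) convexity of $g$ in that order. Fix $\bm{x}, \bm{y} \in \mathbb{R}^p$ with $\bm{x} \neq \bm{y}$ and $\lambda \in (0,1)$, and set $\bm{z} = \lambda \bm{x} + (1-\lambda)\bm{y}$. The goal is to show
\[
(g \circ h)(\bm{z}) < \lambda (g \circ h)(\bm{x}) + (1-\lambda)(g \circ h)(\bm{y}).
\]

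First I would apply strict convexity of $h$ to get the strict inequality $h(\bm{z}) < \lambda h(\bm{x}) + (1-\lambda) h(\bm{y})$. Next, since $g$ is strictly increasing on $\mathbb{R}_+$, this strict inequality is preserved under composition with $g$, yielding
\[
g(h(\bm{z})) < g\bigl(\lambda h(\bm{x}) + (1-\lambda) h(\bm{y})\bigr).
\]
Finally, I would apply convexity of $g$ to the right-hand side to obtain
\[
g\bigl(\lambda h(\bm{x}) + (1-\lambda) h(\bm{y})\bigr) \leq \lambda g(h(\bm{x})) + (1-\lambda) g(h(\bm{y})),
\]
and chaining the two bounds gives the desired strict inequality.

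The only subtle point — and what I would flag as the main (minor) obstacle — is making sure the strict inequality survives the possibility that $h(\bm{x}) = h(\bm{y})$, in which case convexity of $g$ collapses to equality and cannot contribute strictness. This is handled precisely by applying strict monotonicity of $g$ \emph{before} its convexity: the strictness comes from step one (strict convexity of $h$, which holds whenever $\bm{x} \neq \bm{y}$ regardless of whether $h(\bm{x}) = h(\bm{y})$) and is carried through step two by strict monotonicity, so the final non-strict convexity step of $g$ does not break it. No further technical machinery is needed; the proof is just these three displayed lines plus a one-sentence remark on the edge case.
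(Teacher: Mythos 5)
Your proof is correct and follows exactly the same route as the paper's: strict convexity of $h$ first, then strict monotonicity of $g$ to preserve the strict inequality, then convexity of $g$ for the final non-strict step. Your remark about the edge case $h(\bm{x}) = h(\bm{y})$ is a nice clarification of why the ordering matters, but the argument itself is identical.
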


\begin{proof}(Lemma \ref{lem:convex})
This is easy to show using first principles. Let $\alpha \in (0,1)$ and let $\bm{z} \neq \bm{z}'$ be two points in $\mathbb{R}^p$. By strict convexity, we have:
\[h(\alpha \bm{z} + (1 - \alpha) \bm{z}') < \alpha h( \bm{z} ) + (1 - \alpha) h(\bm{z}').\]
Moreover, since $g$ is strictly increasing and convex, it follows that:
\[(g \circ h) (\alpha \bm{z} + (1 - \alpha) \bm{z}') < g( \alpha h( \bm{z} ) + (1 - \alpha) h(\bm{z}')) \leq \alpha (g \circ h) (\bm{z}) + (1 - \alpha) (g \circ h) (\bm{z}'),\]
which proves the strict convexity of $g \circ h$.
\end{proof}

\begin{proof}(Theorem \ref{thm:uni}) Let $g(x) = x^{q/2}$ and $h(\bm{z}) = \|\bm{z} - \bm{z}_i\|_2^2$. It is easy to verify that $h$ is strictly convex, and $g$ is convex and strictly increasing on $\mathbb{R}_+$. By Lemma 1, it follows that $(g \circ f)(\bm{x}) = \|\bm{z} - \bm{z}_i\|_2^q$ is strictly convex. Hence, for any $\alpha \in (0,1)$ and $\bm{z}, \bm{z}' \in \mathbb{R}^p, \bm{z} \neq \bm{z}'$, we have:
\begin{align*}
D_{q}(\alpha \bm{z} + (1 - \alpha) \bm{z}'; \mathcal{Z}) &= \frac{1}{mq}\sum_{i = 1}^n \| \left\{ \left(\alpha \bm{z} + (1 - \alpha) \bm{z}' \right) - \bm{z}_i \right\} \|_2^q\\
&< \frac{1}{mq} \sum_{i = 1}^n \left\{ \alpha \| \bm{z} - \bm{z}_i\|_2^q + (1 - \alpha) \| \bm{z}' - \bm{z}_i\|_2^q \right\}\\
&= \alpha D_{q}( \bm{z} ; \mathcal{Z}) + (1-\alpha) D_{q}( \bm{z}'; \mathcal{Z}),
\end{align*}
so the objective $D_{q}(\bm{z}; \mathcal{Z})$ is strictly convex in $\bm{z}$.

Using this fact, we show that \eqref{eq:ckcent} has a unique minimizer. Note that the objective $D_{q}(\bm{z}; \mathcal{Z})$ is continuous and coercive on the closed set $\mathbb{R}^p$, where the latter term implies that for all sequences $\{\bm{z}_k\}_{k=1}^\infty$ satisfying $\|\bm{z}_k\|_2 \rightarrow \infty$, $\lim_{k \rightarrow \infty} D_{q}(\bm{z}_k; \mathcal{Z}) = \infty.$ It follows from Proposition A.8 in \cite{Ber1999} and the strict convexity of $D_{q}(\bm{z}; \mathcal{Z})$ that there exists exactly one one global minimum of \eqref{eq:ckcent}, so $C_q(\mathcal{Z})$ is uniquely defined.

To prove that the unique minimizer $C_q(\mathcal{Z})$ is contained in $\text{conv}(\mathcal{Z})$, note that by first-order optimality conditions, $C_q(\mathcal{Z})$ must satisfy:
\begin{align*}
\nabla D_{q}(C_q(\mathcal{Z}); \mathcal{Z}) &= \frac{1}{n} \sum_{i=1}^m \left\{\|C_q(\mathcal{Z}) - \bm{z}_i\|_2^{q - 2} (C_q(\mathcal{Z}) - \bm{z}_i) \right\} = \bm{0}\\
&\Leftrightarrow C_q(\mathcal{Z}) = \sum_{i=1}^m \left\{ \frac{\|C_q(\mathcal{Z}) - \bm{z}_i\|_2^{q - 2}}{\sum_{j=1}^n \|C_q(\mathcal{Z}) - \bm{z}_j\|_2^{q - 2} } \; \bm{z}_i \right\} \equiv \sum_{i=1}^m \alpha_i \bm{z}_i.
\end{align*}
Since the weights $\{\alpha_i\}_{i=1}^m$ satisfy $\alpha_i \geq 0$ and $\sum_{i=1}^m \alpha_i = 1$, it follows by definition that $C_q(\mathcal{Z}) \in \text{conv}(\mathcal{Z})$, which is as desired.

\end{proof}

\subsection*{Proof of Theorem \ref{thm:lip}}

\begin{lemma}
Let $\mathcal{Z} = \{\bm{z}_i\}_{i=1}^m$ be a set of points in $\mathbb{R}^p$. Then there exists some point $\bm{z}_j \in \mathcal{Z}$ such that $D_q(\bm{z}_j;\mathcal{Z}) \geq  D_q(\bm{z};\mathcal{Z})$ for all $\bm{z} \in \text{\normalsize{conv}}(\mathcal{Z})$.
\label{lem:bas}
\end{lemma}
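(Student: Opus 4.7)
The plan is to exploit the (strict) convexity of $D_q(\,\cdot\,;\mathcal{Z})$ already established in Theorem \ref{thm:uni}, together with the fact that every point of $\text{conv}(\mathcal{Z})$ is by definition a convex combination of the vertices $\bm{z}_1,\ldots,\bm{z}_m$. Since a convex function on a convex combination is bounded above by the same convex combination of its values at the vertices (Jensen's inequality in the ``easy'' direction), the supremum of $D_q$ over $\text{conv}(\mathcal{Z})$ must be attained at some vertex.

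Concretely, let $\bm{z} \in \text{conv}(\mathcal{Z})$. Then by definition there exist weights $\alpha_1,\ldots,\alpha_m \geq 0$ with $\sum_{i=1}^m \alpha_i = 1$ such that $\bm{z} = \sum_{i=1}^m \alpha_i \bm{z}_i$. The convexity of $D_q(\,\cdot\,;\mathcal{Z})$ (proved in Theorem \ref{thm:uni}) together with a straightforward induction on the number of nonzero weights yields
\[
D_q(\bm{z};\mathcal{Z}) = D_q\!\left(\sum_{i=1}^m \alpha_i \bm{z}_i;\mathcal{Z}\right) \leq \sum_{i=1}^m \alpha_i \, D_q(\bm{z}_i;\mathcal{Z}) \leq \max_{i=1,\ldots,m} D_q(\bm{z}_i;\mathcal{Z}).
\]
Choosing $j \in \arg\max_{i=1,\ldots,m} D_q(\bm{z}_i;\mathcal{Z})$, which is well-defined since the set $\mathcal{Z}$ is finite, gives a point $\bm{z}_j \in \mathcal{Z} \subseteq \text{conv}(\mathcal{Z})$ such that $D_q(\bm{z}_j;\mathcal{Z}) \geq D_q(\bm{z};\mathcal{Z})$ for every $\bm{z} \in \text{conv}(\mathcal{Z})$, which is exactly the claim.

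There is really no main obstacle here: the lemma is a standard consequence of the fact that a convex function on a polytope attains its maximum at a vertex, and the Jensen-style argument above gives it directly without invoking external theorems. The only item requiring care is citing the right form of convexity from Theorem \ref{thm:uni} (strict convexity suffices, since we only need ordinary convexity for the inequality) and noting that the inductive extension of Jensen's inequality from two-point to $m$-point convex combinations is immediate.
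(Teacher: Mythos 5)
Your proof is correct and rests on exactly the same key inequality as the paper's: writing $\bm{z} = \sum_i \alpha_i \bm{z}_i$ and using convexity of $D_q(\cdot\,;\mathcal{Z})$ to get $D_q(\bm{z};\mathcal{Z}) \leq \sum_i \alpha_i D_q(\bm{z}_i;\mathcal{Z}) \leq \max_i D_q(\bm{z}_i;\mathcal{Z})$. The only difference is organizational --- the paper phrases this as a proof by contradiction, first invoking compactness of $\text{conv}(\mathcal{Z})$ to obtain a maximizer and then deriving the same Jensen-type bound to force a vertex into the set of maximizers --- whereas your direct argument reaches the conclusion without either the compactness step or the contradiction, which is arguably cleaner.
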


\begin{proof}(Lemma \ref{lem:bas})
Since conv$(\mathcal{Z})$ is a compact set, the set of maximizers in:
\[\mathcal{M} = \text{argmax}_{\bm{z} \in \text{conv}(\mathcal{Z})} D_q(\bm{z};\mathcal{Z})\]
is non-empty, so an equivalent claim is that $\bm{z}_j \in \mathcal{M}$ for some $j=1, \cdots, m$. Suppose, for contradiction, that $\bm{z}_j \notin \mathcal{M}$ for all $j = 1, \cdots, m$, and let $\bm{z}' = \sum_{i = 1}^m \alpha_j \bm{z}_j \notin \mathcal{Z}$ be a maximizer in $\mathcal{M}$, with $\alpha_j \geq 0$ and $\sum_{j=1}^m \alpha_j = 1$. Then, by convexity, we have:
\small
\begin{align*}
D_q(\bm{z}'; \mathcal{Z}) = \frac{1}{mq}\sum_{i=1}^m \left\|  \sum_{j = 1}^m \alpha_j (\bm{z}_j - \bm{z}_i) \right\|_2^q \leq \frac{1}{mq} \sum_{i=1}^m \sum_{j=1}^m \alpha_j \|\bm{z}_j - \bm{z}_i\|_2^q &= \frac{1}{mq} \sum_{j=1}^m \alpha_j \left( \sum_{i=1}^m \|\bm{z}_j - \bm{z}_i\|_2^q\right)\\
&= \sum_{j=1}^m \alpha_j D_q(\bm{z}_j;\mathcal{Z}),
\end{align*}
\normalsize
which implies that $D_q(\bm{z}'; \mathcal{Z}) \leq D_q(\bm{z}_j; \mathcal{Z})$ for at least one $j = 1, \cdots, m$. Since $\bm{z}' \in \mathcal{M}$, this implies that $\bm{z}_j \in \mathcal{M}$, which is a contradiction. The lemma therefore holds.
\end{proof}

\begin{proof}(Theorem \ref{thm:lip})
Since $D_{q}( \bm{z}; \mathcal{Z})$ is twice-differentiable, it is $\beta$-smooth on $\text{conv}(\mathcal{Z})$ if and only if:
\begin{equation}
\nabla^2 D_{q}( \bm{z}; \mathcal{Z}) \preceq \beta \bm{I} \quad \text{for all $\bm{z} \in \text{conv}(\mathcal{Z})$}.
\label{eq:lip}
\end{equation}
Letting $\lambda_{max}\{\bm{A}\}$ denote the largest eigenvalue of $\bm{A}$, it follows that:
\begin{align*}
\lambda_{max}\{\nabla^2 D_{q}( \bm{z}; \mathcal{Z})\} &= \lambda_{max}\left\{\frac{q-2}{m} \sum_{i=1}^m \left\{ \|\bm{z} - \bm{z}_i\|_2^{q-4} (\bm{z} - \bm{z}_i)(\bm{z} - \bm{z}_i)^T\right\} + \frac{1}{m} \sum_{i=1}^m \|\bm{z}-\bm{z}_i\|_2^{q-2} \bm{I}\right\}\\
& \leq \frac{q-2}{m} \sum_{i=1}^m \|\bm{z} - \bm{z}_i\|_2^{q-4} \lambda_{max}\left\{(\bm{z} - \bm{z}_i)(\bm{z} - \bm{z}_i)^T\right\}+ \frac{1}{m} \sum_{i=1}^m \|\bm{z}-\bm{z}_i\|_2^{q-2} \lambda_{max}\{\bm{I}\}\\
& = \frac{q-2}{m} \sum_{i=1}^m \|\bm{z} - \bm{z}_i\|_2^{q-4} \cdot \|\bm{z} - \bm{z}_i\|_2^{2} + \frac{1}{m} \sum_{i=1}^m \|\bm{z}-\bm{z}_i\|_2^{q-2}\\
& = \frac{q-1}{m} \sum_{i=1}^m \|\bm{z} - \bm{z}_i\|_2^{q-2} \leq \frac{q-1}{m} \max_{j=1, \cdots, m}\sum_{i=1}^m \|\bm{z}_j - \bm{z}_i\|_2^{q-2} = \bar{\beta},
\end{align*}
where the last inequality holds by Lemma \ref{lem:bas}. Hence, $\nabla^2 D_{q}( \bm{z}; \mathcal{Z}) \preceq \bar{\beta} \bm{I}$ for all $\bm{z} \in \text{conv}(\mathcal{Z})$, so $D_{q}( \bm{z}; \mathcal{Z})$ is $\bar{\beta}$-smooth on $\text{conv}(\mathcal{Z})$ by \eqref{eq:lip}.

Likewise, since $D_{q}( \bm{z}; \mathcal{Z})$ is twice-differentiable, it is $\mu$-strongly convex on $\text{conv}(\mathcal{Z})$ if and only if:
\begin{equation}
 \mu \bm{I} \preceq \nabla^2 D_{q}( \bm{z}; \mathcal{Z}) \quad \text{for all $\bm{z} \in \text{conv}(\mathcal{Z})$}.
\label{eq:sc}
\end{equation}
Letting $\lambda_{min}\{\bm{A}\}$ denote the smallest eigenvalue of $\bm{A}$, we have:
\small
\begin{align*}
\lambda_{min}\{\nabla^2 D_{q}( \bm{z}; \mathcal{Z})\} &= \lambda_{min}\left\{\frac{q-2}{m} \sum_{i=1}^m \left\{ \|\bm{z} - \bm{z}_i\|_2^{q-4} (\bm{z} - \bm{z}_i)(\bm{z} - \bm{z}_i)^T\right\} + \frac{1}{m} \sum_{i=1}^m \|\bm{z}-\bm{z}_i\|_2^{q-2} \bm{I}\right\}\\
& \geq \frac{q-2}{m} \sum_{i=1}^m \|\bm{z} - \bm{z}_i\|_2^{q-4} \lambda_{min}\left\{(\bm{z} - \bm{z}_i)(\bm{z} - \bm{z}_i)^T\right\}+ \frac{1}{m} \sum_{i=1}^m \|\bm{z}-\bm{z}_i\|_2^{q-2} \lambda_{min}\{\bm{I}\}\\
& \geq \frac{q-2}{m} \sum_{i=1}^m \|\bm{z} - \bm{z}_i\|_2^{q-4} \cdot 0 + \frac{1}{m} \sum_{i=1}^m \|\bm{z}-\bm{z}_i\|_2^{q-2}\geq \frac{1}{m} \sum_{i=1}^m \|C_{q-2}(\mathcal{Z})-\bm{z}_i\|_2^{q-2} = \bar{\mu},
\end{align*}
\normalsize
where the last inequality holds by definition of $C_{q-2}(\mathcal{Z})$. Hence by \eqref{eq:sc}, $D_{q}( \bm{z}; \mathcal{Z})$ is $\bar{\mu}$-strongly convex.
\end{proof}

\subsection*{Proof of Corollary \ref{corr:cqconv}}

Consider a $\beta$-smooth and $\mu$-strongly convex function $h$ with unique minimizer $\bm{u}^*$. It can be shown \citep{Nes2007} that an iteration upper bound of $t = O\left( \sqrt{\frac{\beta}{\mu}} \log \frac{1}{\epsilon_{in}} \right)$ guarantees an $\epsilon_{in}$-accuracy in objective, i.e. $|h(\bm{u}^{[t]}) - h(\bm{u}^*)| < \epsilon_{in}$. Combining this iteration bound with the result in Theorem \ref{thm:lip}, and using the fact that each update requires $O(mp)$ work, we get the desired result.

\subsection*{Proof of Theorem \ref{thm:mMconv}}

The three parts of this theorem are individually easy to verify. For finite termination, we showed in Section 3.1 that the objective in \eqref{eq:clustk} strictly decreases after each loop iteration of Algorithm \ref{alg:mMc}. Moreover, there are exactly $N^n$ possible assignments of the sample $\{\bm{y}_j\}_{j=1}^N$ to the design points $\{\bm{m}_i\}_{i=1}^n$. Suppose, for contradiction, that Algorithm \ref{alg:mMc} does not terminate after $N^n$ iterations. Then there exists at least two iterations which begin with the same assignment of $\{\bm{y}_j\}_{j=1}^N$. This, in turn, generates the same design $\{\bm{m}_i\}_{i=1}^n$ at the end of both iterations, which presents a contradiction to the strictly decreasing objective values induced by each loop iteration of Algorithm \ref{alg:mMc}. The first claim therefore holds.

Next, regarding running time, consider the two updates in a single loop iteration of Algorithm \ref{alg:mMc}. The first update assigns each sample point in $\{\bm{y}_j\}$ to its closest design point, which requires $O(Nnp)$ work. The second update computes, for each design point, the $C_q$-center of samples assigned to it. Let $\mathcal{Z} = \{\bm{z}_j\}_{j=1}^{m_i}$ be the $m_i$ points assigned to the $i$-th design point. From Corollary \ref{corr:cqconv}, the computation of its $C_q$-center requires $O(m_i p \sqrt{(q-1) \kappa_{q-2}(\mathcal{Z}) \log (1/\epsilon_{in})})$ work. Letting $\tilde{\bm{z}} = \text{argmax}_{j=1, \cdots, m_i} D_{q}( \bm{z}_j; \mathcal{Z})$, it follows that for any $q \geq 2$:
\begin{align*}
\kappa_q(\mathcal{Z}) = \frac{D_q(\tilde{\bm{z}}; \mathcal{Z})}{D_{q}( C_{q}(\mathcal{Z}); \mathcal{Z})} &\leq \frac{\sum_{i=1}^{m_i} \|\bm{z}_i - C_{q}(\mathcal{Z})\|_2^q + m_i \|\tilde{\bm{z}} - C_{q}(\mathcal{Z})\|_2^q}{\sum_{i=1}^{m_i} \|\bm{z}_i - C_{q}(\mathcal{Z})\|_2^q}\\
& \leq 1+\frac{m_i \|\tilde{\bm{z}} - C_{q}(\mathcal{Z})\|_2^q}{\sum_{i=1}^{m_i} \|\bm{z}_i - C_{q}(\mathcal{Z})\|_2^q} \leq m_i+1.
\end{align*}
Hence, updating $C_q$-centers for all $n$ design points require a total work of:
\small
\begin{align*}
\sum_{i=1}^n O(m_i p \sqrt{(q-1) \kappa_{q-2}(\mathcal{Z}) \log (1/\epsilon_{in})}) &\leq O\left( \left\{\sum_{i=1}^n m_i^{3/2}\right\} p\sqrt{q-1} \log \frac{1}{\epsilon_{in}}\right)\\
& \leq O\left( \left\{\sum_{i=1}^n m_i \right\}^{3/2} p\sqrt{q-1} \log \frac{1}{\epsilon_{in}}\right)\\
&= O\left( N^{3/2} p\sqrt{q-1} \log \frac{1}{\epsilon_{in}}\right).
\end{align*}
\normalsize
Finally, since $n \leq N^{1/2}$, the running time of the second step dominates the first, which completes the argument.

Finally, assume that the $C_q$-center updates in \eqref{eq:ckcent} are exact. By the termination conditions of Algorithm \ref{alg:mMc}, the converged design is optimal given fixed assignments, and the converged assignment variables are optimal given a fixed design. Hence, the converged design (as well as its corresponding assignment) are locally optimal for \eqref{eq:clustk}.

\subsection*{Proof of Proposition \ref{prop:inv}}

This can be shown by a simple application of the triangle inequality. Let $\mathcal{D} = \{\bm{m}_i\}_{i=1}^n$ be the design at the current iteration, and without loss of generality, suppose the first design point $\bm{m}_1$ is to be updated. Also, let $\{d_i\}_{i=1}^n$ be the minimax distances for each design point (defined in \eqref{eq:di}), with $d^* = \max_i d_i$ being the overall minimax distance of $\mathcal{D}$.

Let $\tilde{\bm{m}}_1$ be the optimal design point in \eqref{eq:blockmp}, and note that, by optimization constraints, $\|\tilde{\bm{m}}_1 - \bm{m}_1\| \leq d^* - d_1$. Denoting $\tilde{d}^*$ as the overall minimax distance of the new design $\tilde{\mathcal{D}} = \{\tilde{\bm{m}}_1, \bm{m}_2, \cdots, \bm{m}_n\}$, the claim is that $\tilde{d}^* \leq d^*$. To prove this, let $\bm{x}$ be the point in $\mathcal{X}$ achieving the minimax distance $\tilde{d}^*$, and consider the following three cases:
\bi
\item If $Q(\bm{x},\tilde{\mathcal{D}})$, the closest design point to $\bm{x}$ in $\tilde{\mathcal{D}}$, equals $\tilde{\bm{m}}_1$, then:
\[\tilde{d}^*  = \|\bm{x} -  \tilde{\bm{m}_1}\| \leq \|\bm{x} - \bm{m}_1\| + \|\bm{m}_1-\tilde{\bm{m}}_1\| \leq d_1 + (d^* - d_1) = d^*.\]
\item If $Q(\bm{x},\tilde{\mathcal{D}}) = \bm{m}_i$ for some $i = 2, \cdots, n$, and $Q(\bm{x},{\mathcal{D}}) = \bm{m}_1$, then:
\[\tilde{d}^*  = \|\bm{x} -  {\bm{m}_i}\| \leq \|\bm{x} - \tilde{\bm{m}}_1\| \leq \|\bm{x} - {\bm{m}}_1\| + \|\bm{m}_1-\tilde{\bm{m}}_1\| \leq d_1 + (d^* - d_1) = d^*.\]
\item If $Q(\bm{x},\tilde{\mathcal{D}}) = \bm{m}_i$ for some $i = 2, \cdots, n$, and $Q(\bm{x},{\mathcal{D}}) = \bm{m}_j$ for some $j = 1, \cdots, n$, then it must be the case that $i = j$, since the only change from $\mathcal{D}$ to $\tilde{\mathcal{D}}$ is the first design point. Hence:
\[\tilde{d}^*  = \|\bm{x} -  {\bm{m}_i}\| \leq d_i \leq  d^*.\]
\ei 
This proves the proposition.

\pagebreak
\section{Minimax designs on $[0,1]^p$}

\begin{figure}[h]
\centering
\includegraphics[scale=0.50]{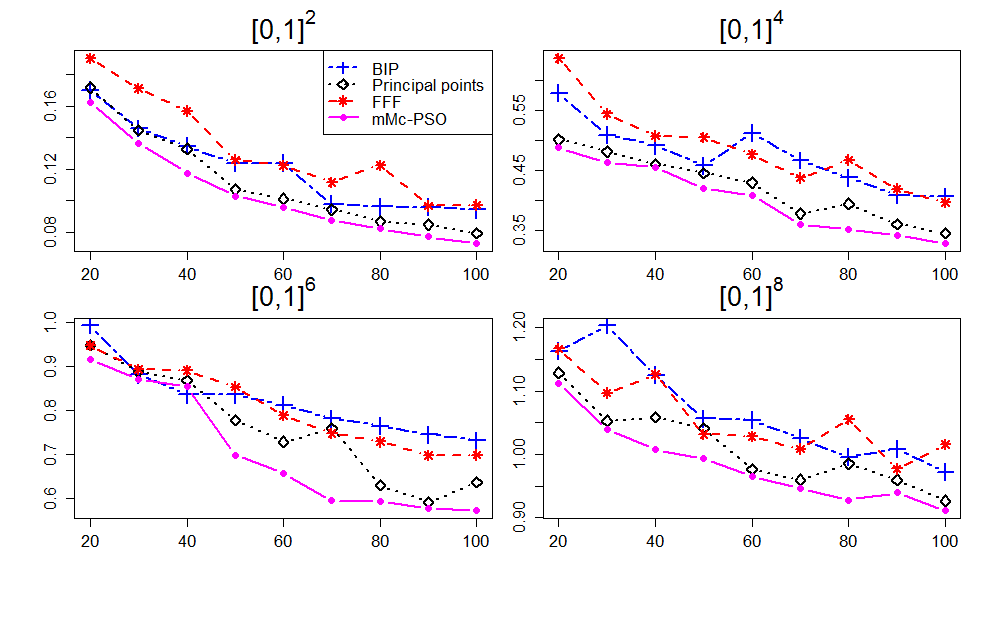}
\caption{Minimax criterion on $[0,1]^p$ for $p = 2, 4, 6$ and $8$.}
\end{figure}

\pagebreak

\begin{figure}[!htb]
\includegraphics[scale=0.50]{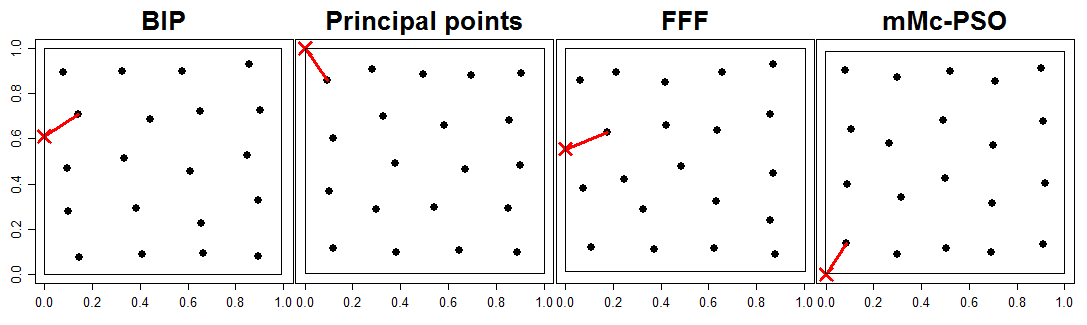}
\vspace{-0.2cm}
\includegraphics[scale=0.50]{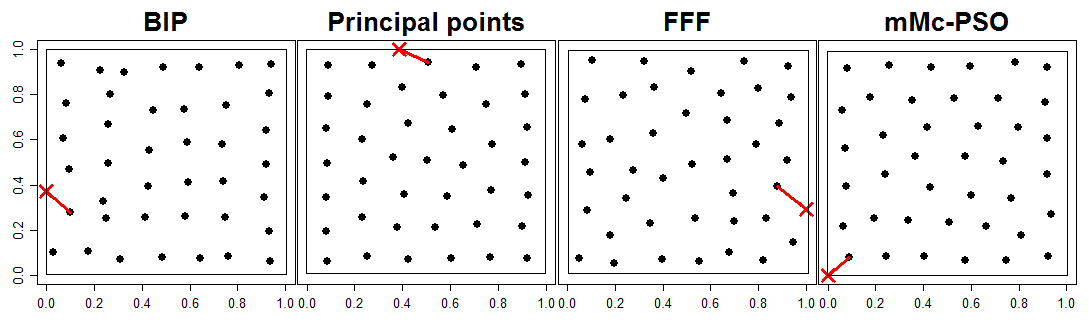}
\vspace{-0.2cm}
\includegraphics[scale=0.50]{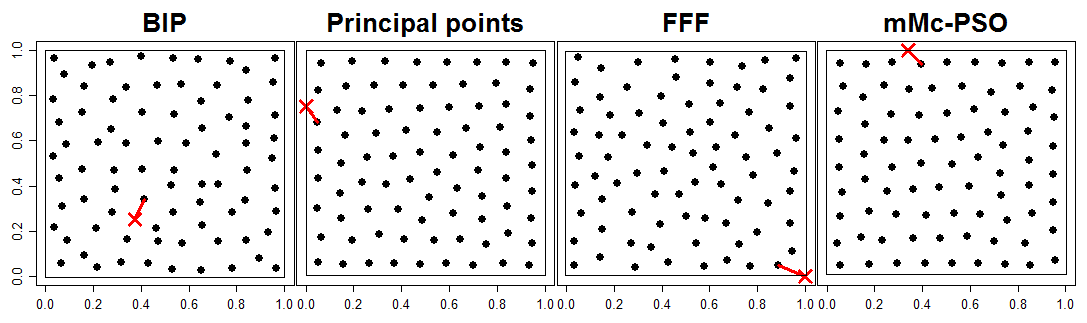}
\vspace{-0.2cm}
\includegraphics[scale=0.50]{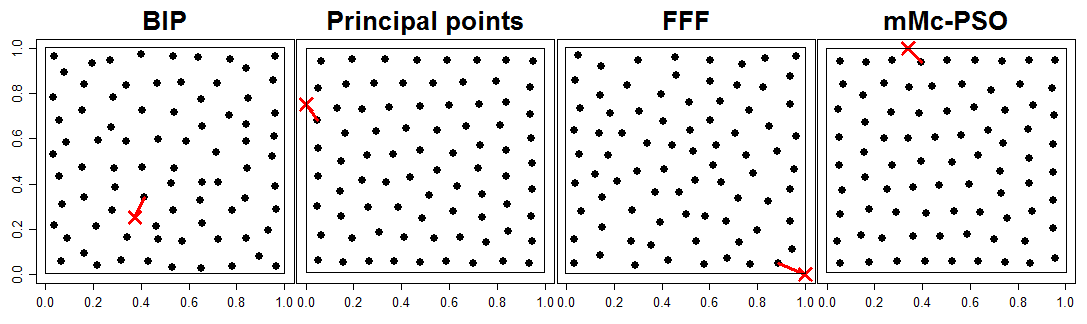}
\vspace{-0.2cm}
\includegraphics[scale=0.50]{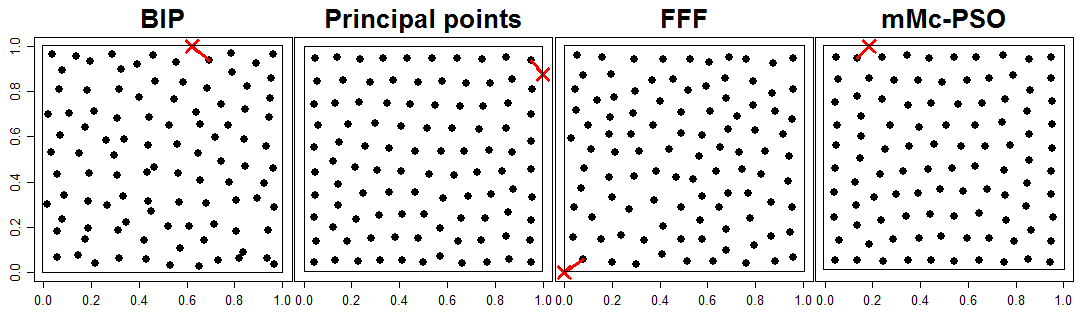}
\caption{20-, 40-, 60-, 80- and 100-point designs on the unit hypercube $[0,1]^2$.}
\end{figure}

\pagebreak

\section{Minimax designs on $A_p$ and $B_p$}

\begin{figure}[!htb]
\centering
\includegraphics[scale=0.50]{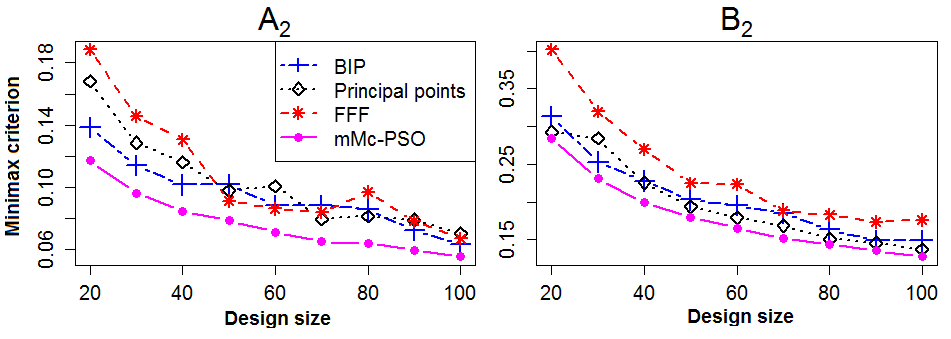}
\includegraphics[scale=0.45]{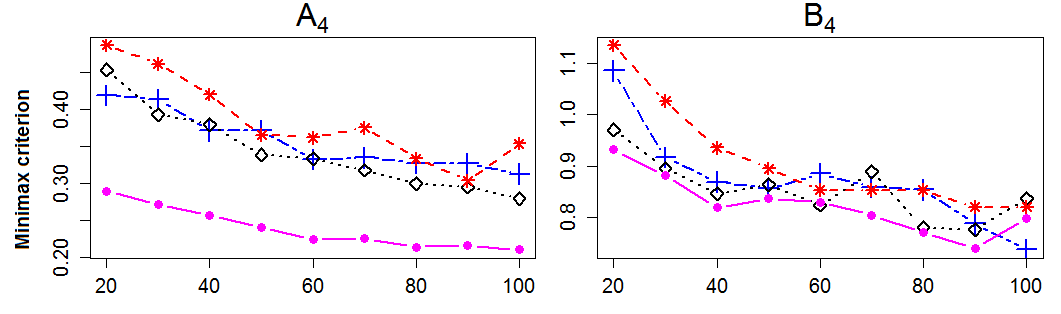}
\includegraphics[scale=0.45]{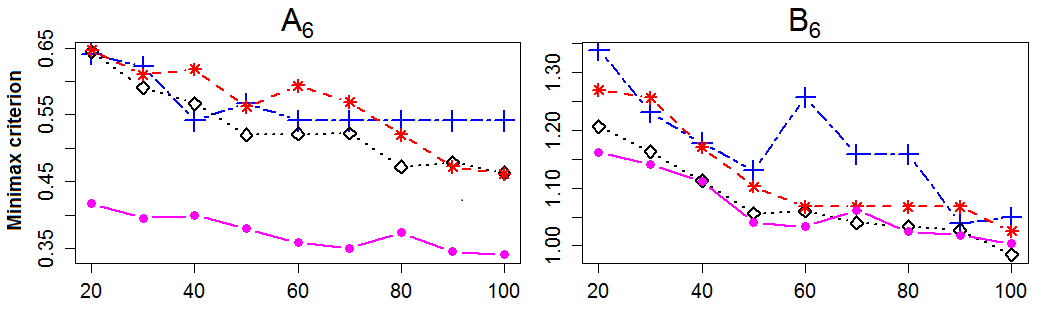}
\includegraphics[scale=0.50]{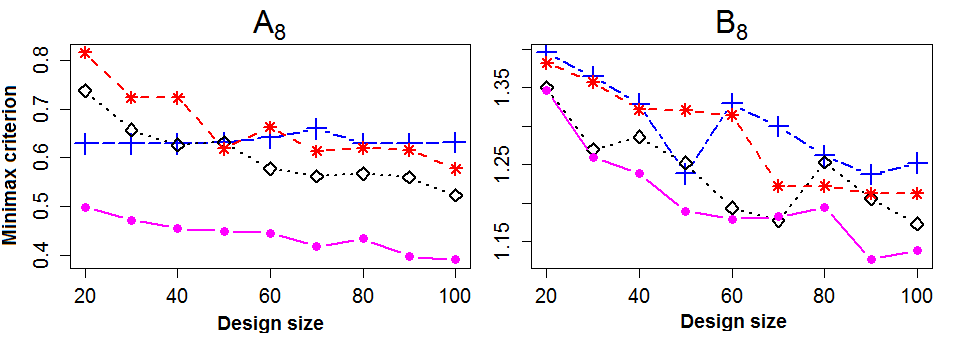}
\caption{Minimax criterion on $A_p$ and $B_p$ for $p = 2, 4, 6$ and $8$.}
\end{figure}

\pagebreak

\begin{figure}[!htb]
\includegraphics[scale=0.50]{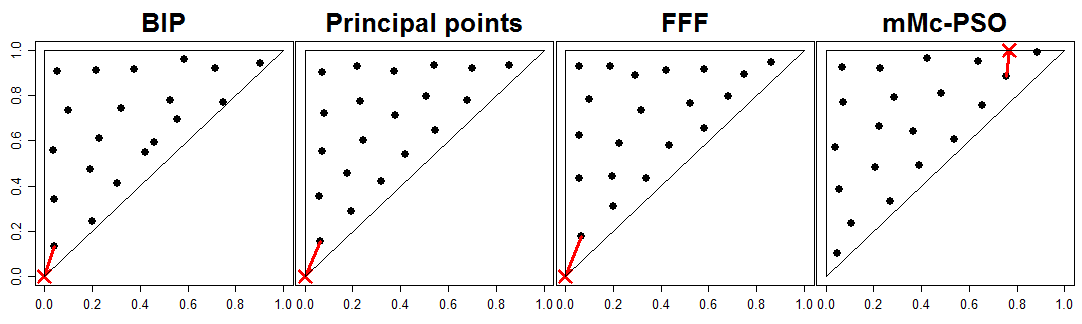}
\vspace{-0.2cm}
\includegraphics[scale=0.50]{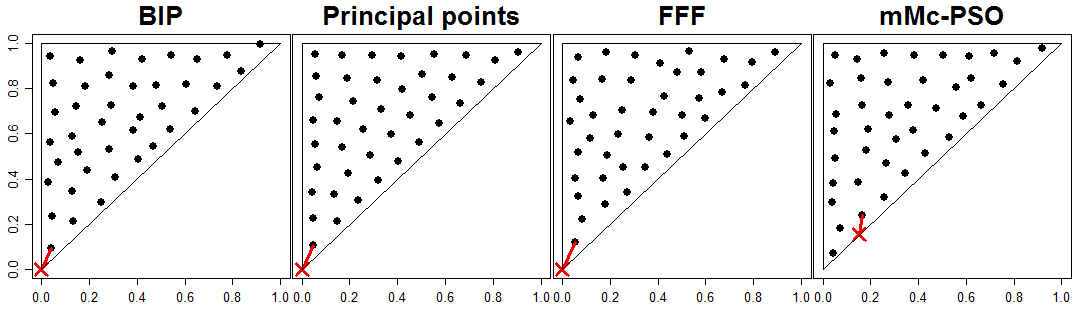}
\vspace{-0.2cm}
\includegraphics[scale=0.50]{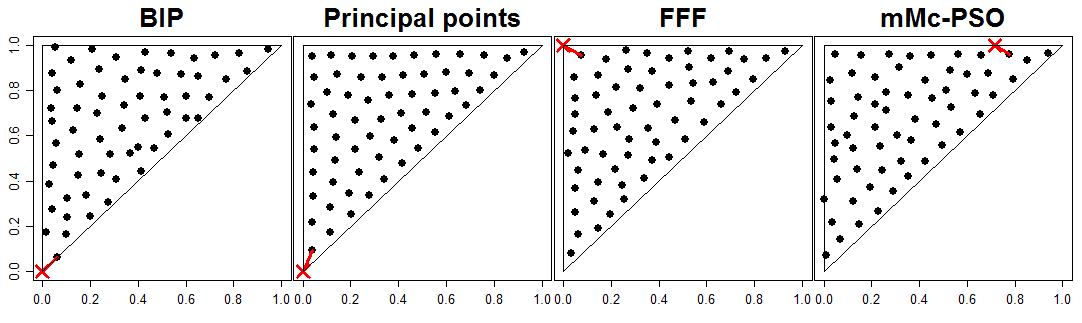}
\vspace{-0.2cm}
\includegraphics[scale=0.50]{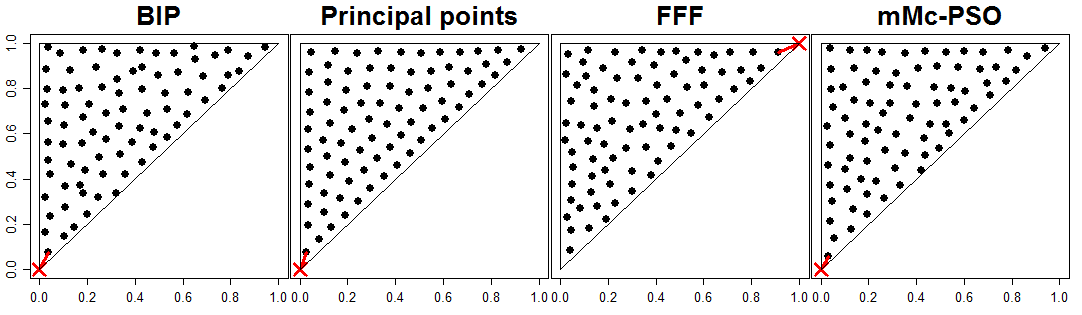}
\vspace{-0.2cm}
\includegraphics[scale=0.50]{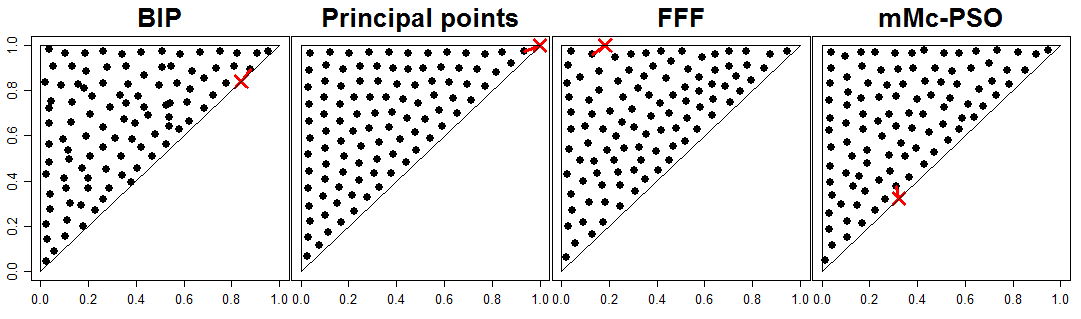}
\caption{20-, 40-, 60-, 80- and 100-point designs on the unit simplex $A_2$.}
\end{figure}

\pagebreak

\begin{figure}[!htb]
\includegraphics[scale=0.50]{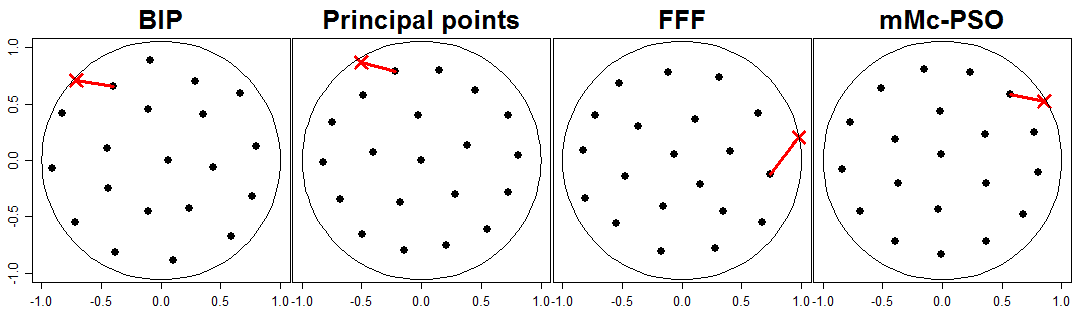}
\vspace{-0.2cm}
\includegraphics[scale=0.50]{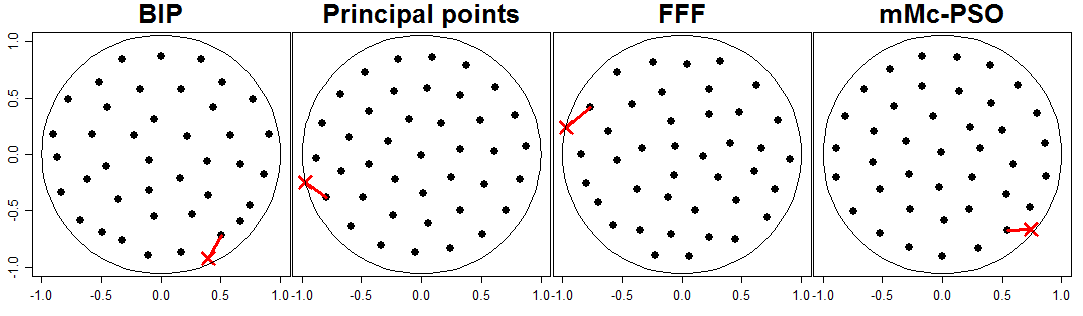}
\vspace{-0.2cm}
\includegraphics[scale=0.50]{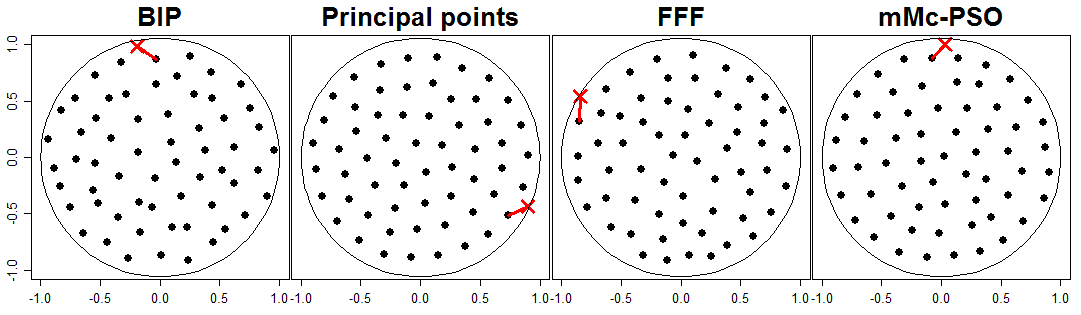}
\vspace{-0.2cm}
\includegraphics[scale=0.50]{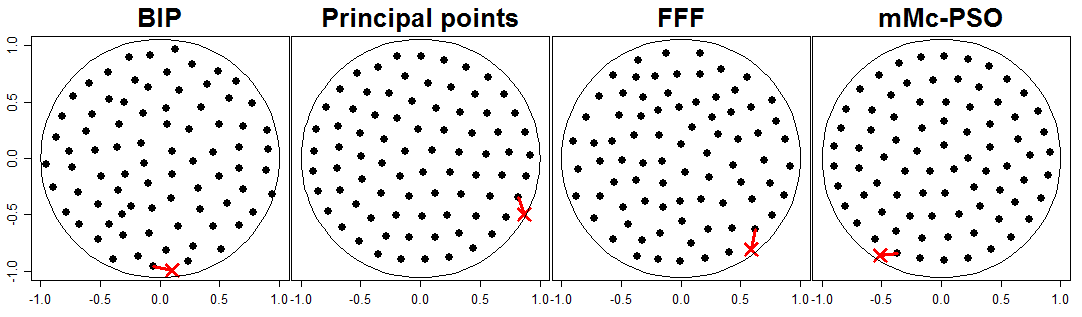}
\vspace{-0.2cm}
\includegraphics[scale=0.50]{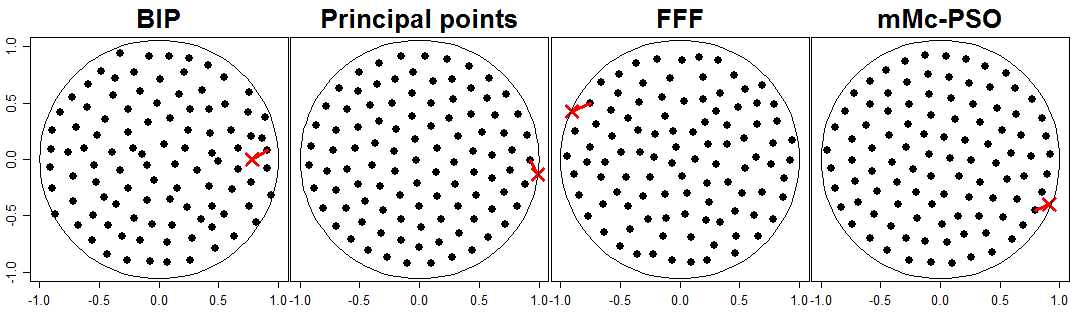}
\caption{20-, 40-, 60-, 80- and 100-point designs on the unit ball $B_2$.}
\end{figure}

\pagebreak
\section{Minimax designs on Georgia}
\begin{figure}[!htb]
\includegraphics[scale=0.45]{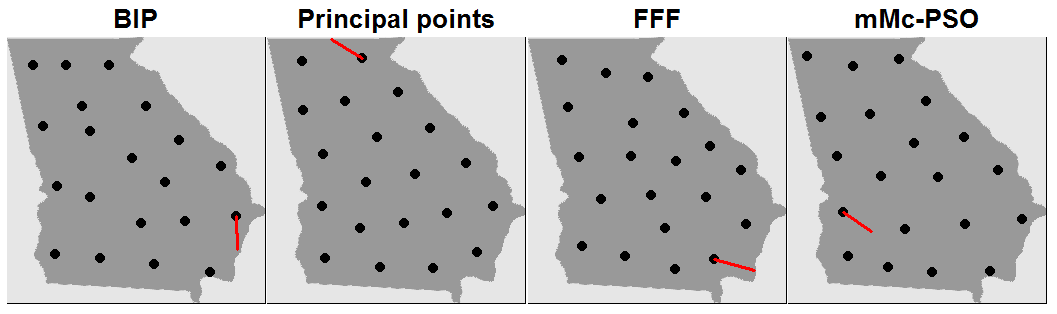}
\vspace{-0.2cm}
\includegraphics[scale=0.45]{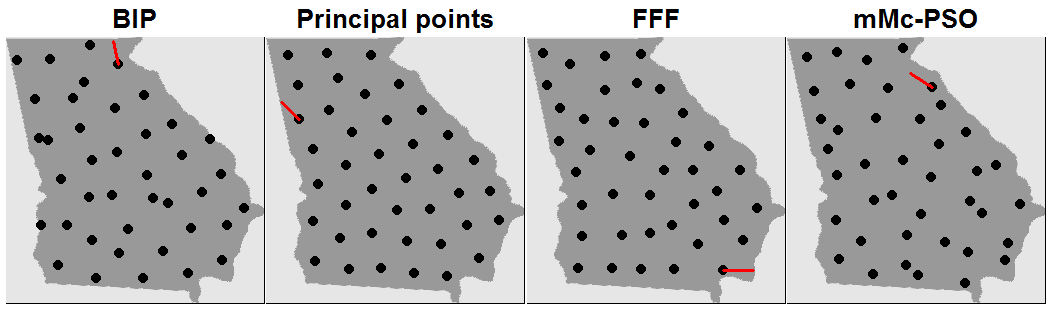}
\vspace{-0.2cm}
\includegraphics[scale=0.45]{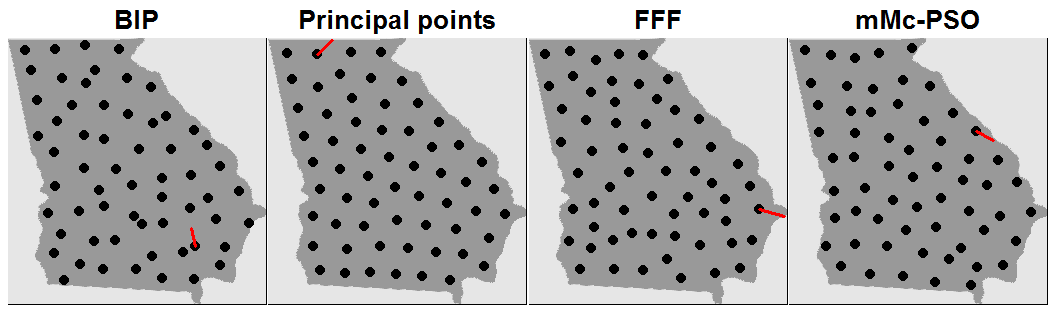}
\vspace{-0.2cm}
\includegraphics[scale=0.45]{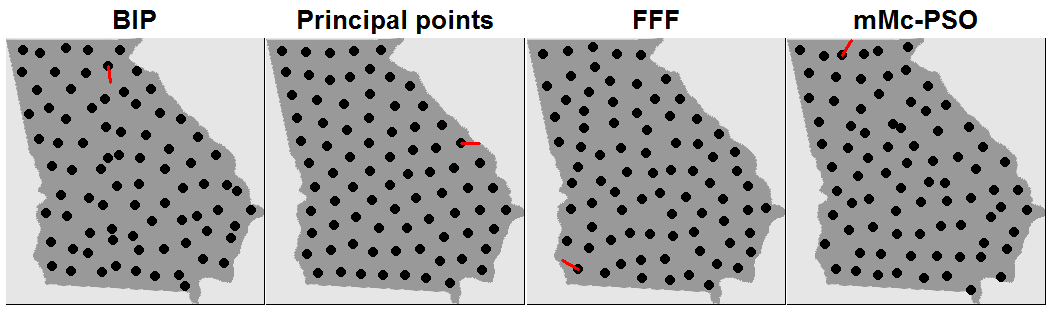}
\vspace{-0.2cm}
\includegraphics[scale=0.45]{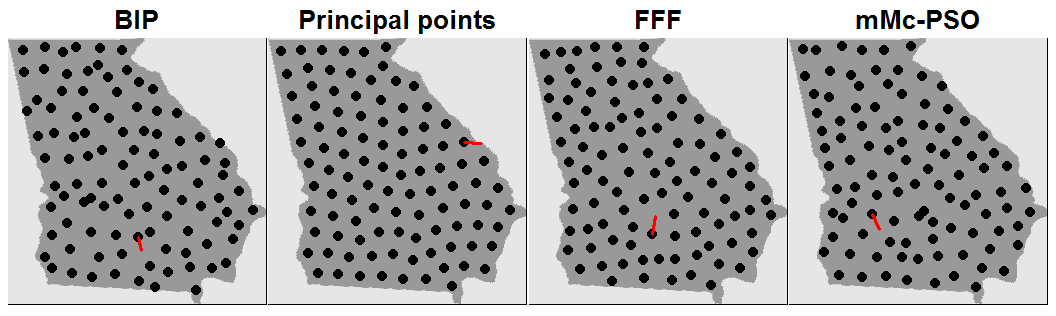}
\caption{20-, 40-, 60-, 80- and 100-point designs on Georgia.}
\end{figure}

\end{appendices}

\end{document}